\newcommand{\fg}{{\mathfrak g}}
\newcommand{\fk}{{\mathfrak k}}
\newcommand{\fm}{{\mathfrak m}}
\newcommand{\fZ}{{\mathfrak Z}}
\newcommand{\su}{\mathfrak{su}}
\newcommand{\SU}{{\rm SU}}
\newcommand{\Ad}{\mathrm{Ad}}
\renewcommand{\det}{\mathop\mathrm{det}\nolimits}
\renewcommand{\epsilon}{\varepsilon}
\newcommand{\Lie}{\mathrm{Lie}}
\newcommand{\Span}{\mathrm{span}}
\newcommand{\tr}{\mathop{\mathrm{tr}}\nolimits}
\newcommand{\qandq}{\quad\text{and}\quad}
\newcommand{\qforq}{\quad\text{for}\quad}
\def\<{\mathopen{}\left<}
\def\>{\right>\mathclose{}}
\def\({\mathopen{}\left(}
\def\){\right)\mathclose{}}
\definecolor{gold}{rgb}{0.85,.66,0}
\definecolor{cherry}{rgb}{0.9,.1,.2}
\definecolor{burgundy}{rgb}{0.8,.2,.2}
\definecolor{orangered}{rgb}{0.85,.3,0}
\definecolor{orange}{rgb}{0.85,.4,0}
\definecolor{olive}{rgb}{.45,.4,0}
\definecolor{lime}{rgb}{.6,.9,0}
\definecolor{green}{rgb}{.2,.7,0}
\definecolor{grey}{rgb}{.4,.4,.2}
\definecolor{brown}{rgb}{.4,.3,.1}
\newtheorem{theorem}{Theorem}
\newtheorem{proposition}{Proposition}
\newtheorem{corollary}[proposition]{Corollary}
\newtheorem{lemma}[proposition]{Lemma}
\newtheorem{remark}[proposition]{Remark}
\theoremstyle{definition}
\newtheorem{definition}[proposition]{Definition}
\begin{document}
%Metadata
\title{Cartan-Khaneja-Glaser decomposition of $\SU(2^n)$ via involutive automorphisms}
\author[1,2]{John A. Mora Rodríguez\thanks{j196970@dac.unicamp.br}}
\author[1]{Arthur C. R. Dutra\thanks{a201381@dac.unicamp.br}}
\author[1]{Henrique N. Sá Earp\thanks{hqsaearp@unicamp.br}}
\author[1]{Marcelo {Terra Cunha}\thanks{tcunha@unicamp.br}}

\affil[1]{Instituto de Matemática, Estatística e Computação Científica\par
Universidade Estadual de Campinas (Unicamp)\par
Campinas-SP, Brasil}
\affil[2]{QuIIN - Quantum Industrial Innovation, \par 
EMBRAPII CIMATEC Competence Center in Quantum Technologies,
SENAI CIMATEC\par
Salvador-BA, Brazil}

\date{\today}

\maketitle

\begin{abstract}
We present a novel algorithm for performing the Cartan-Khaneja-Glaser decomposition of unitary matrices in $\SU(2^n)$, a critical task for efficient quantum circuit design. Building upon the approach introduced by Sá Earp and Pachos (2005), we overcome key limitations of their method, such as reliance on ill-defined matrix logarithms and the convergence issues of truncated Baker–Campbell–Hausdorff (BCH) series. Our reformulation leverages the algebraic structure of involutive automorphisms and symmetric Lie algebra decompositions to yield a stable and recursive factorization process. We provide a full Python implementation of the algorithm, available in an open-source repository, and validate its performance on matrices in $\SU(8)$ and $\SU(16)$ using random unitary benchmarks. The algorithm produces decompositions that are directly suited to practical quantum hardware, with factors that can be implemented near-optimally using standard gate sets.
\end{abstract}

\tableofcontents

\newpage
\section{Introduction}

At the heart of quantum algorithms lies the need to implement unitary transformations in the group $\SU(2^n)$, where $n$ is the number of qubits involved, which will evolve the system from an initial state to a desired state\cite{nilsen_chuang}. 
Very celebrated results show that quantum computing can be implemented using only a few quantum gates, acting on one and two qubits only \cites{knill1995approximation,DiVincenzo_1995}.
%However, efficiently implementing arbitrary unitaries is quite challenging.
This is usually referred to in the quantum information community as \emph{universality}: a small set of unitaries which can be used to approximate any given unitary is an \emph{universal set}.
Well-known examples of universal sets are the Control-NOT and a well-chosen small subset of $\SU(2)$, cf.  \cite{EkertNotes}.

While universality guarantees the existence of such approximations, it does not provide efficient or practical means for decomposing arbitrary unitaries into elementary gates. 
As it is typical from computer science, universality is an asymptotic result.
Realistic projects of quantum computers can take advantage of other choices of building blocks
The problem of finding optimal or structured decompositions becomes central to quantum circuit design \cite{PhysRevLett.92.177902,PhysRevLett.93.130502}. 
In this context, Khaneja and Glaser  introduced a promising approach, using the Cartan decomposition in the $\SU(2^n)$ group to express a unitary transformation as a product of single-qubit evolution and multi-qubit Abelian elements \cite{khaneja2000}. 

In 2005, the third-named author and Pachos proposed an algorithm to implement this decomposition, but its implementation was experimental, contained mathematical limitations, and the original source code has since become unavailable \cite{SaEarp2005}. 
Meanwhile, the past decade has witnessed remarkable progress in the realization of quantum hardware with dozens of controllable qubits \cite{Sycamore}, along with more efficient techniques for implementing the output factors of Cartan decompositions in real devices \cite{Mansky_2023}.
Even with the fast development of hardware, the necessity for optimizing the use of contemporary hardware justifies a resurgence of interest in this construction among the quantum information community \cite{Liu_2021, PhysRevLett.129.070501}.

In this work, we revisit the original algorithm from \cite{SaEarp2005}, addressing its limitations and proposing a more reliable alternative. We also provide a Python implementation available in our GitHub repository \cite{github}. The original algorithm decomposes a general element $G\in\SU(2^n)$ into elements generated from certain maximal abelian subalgebras and two-qubit\footnote{It should be noted that the problem of decomposing from $\SU(4)$ down to $\SU(2)$ is already well-understood, and efficient algorithms for it are available \cite{PhysRevA.69.032315, QiskitTwoQubitBasisDecomposer}, which constitutes the natural initial step in our iteration scheme} unitaries in $\SU(4)^{\otimes n/2}$, by iteratively applying the so-called $KHK$ decomposition. The approach relies on numerically solving the zeros of a matrix polynomial derived by truncations of the Baker-Campbell-Hausdorff formula for products in a Lie group. 
However, when re-implementing this algorithm, we found two issues which led to improvements: (i) the inconvenience of constantly extracting matrix logarithms,  considering that exponential injectivity is only guaranteed in a neighborhood of the identity; and (ii) the uncertain convergence of the BCH series, so that truncation may not yield the expected result in some cases. 

To overcome these limitations, our novel approach consists of using the properties of involutive automorphisms of a Lie group, corresponding to an orthogonal symmetric Lie algebra $(\mathfrak{g},\theta)$ and its associated decomposition $\mathfrak{g}=\mathfrak{k}\oplus\mathfrak{m}$, to obtain a new method to calculate the elements $m\in \mathfrak{m}$ and $K\in\exp(\mathfrak{k})$ in the $G=K\exp(m)$ decomposition of any matrix $G\in \SU(2^n)$, for $n>2$. This gives us an algebraic process that replaces the numerical optimization required by the use of Baker-Campbell-Hausdorff expansions. As a bonus, the involutions almost completely eliminate the need to take logarithms of matrices, and they have the effect of concentrating the errors on the ‘Cartan’ components, which are not subsequently decomposed, thus avoiding error propagation.

Readers interested in a systematic treatment of Lie Theory might consult the excellent mathematical sources \cite{Helgason,Knapp1996,SanMartin2010, Gilmore_2008}.
\paragraph{Overview of the paper.}
In Section~\ref{sec: KHK decomp}, we review the theoretical foundation of the Cartan-Khaneja-Glaser (KHK) decomposition via symmetric Lie algebras and involutive automorphisms. Section~\ref{sec: Khaneja-Glaser bases} introduces the Khaneja-Glaser bases for $\su(2^n)$ and outlines the associated decomposition into maximally abelian subalgebras. Building on this, we develop in Section~\ref{sec: Breakdown of algorithm} a recursive algorithm to implement the KHK decomposition of any unitary matrix in $\SU(2^n)$, avoiding convergence and injectivity difficulties which arise in the original algorithm of \cite{SaEarp2005}, and culminating in the final factorization form of Corollary~\ref{def: cartann decomp}. Finally, some practical implementation issues,  error control strategies, performance benchmarking, and future developments are discussed in Section~\ref{sec: conclusion}. We validate our code with benchmarks in $\SU(8)$ and $\SU(16)$ and present a complete worked example in Appendix~\ref{app: example}. Additional technical details are provided in Appendices~\ref{appendix: phase}--\ref{app: BCH}.

\newpage
%%%%%%%%
\section{ KHK decomposition and Khaneja-Glaser bases}
\label{sec: KHK and K-G decomp}

We will adopt the following notation throughout this paper
\begin{table}[h]
\centering
\begin{tabular}{lll}
$\mathbf{G}$   & capital bold    & Lie group or subgroup       \\
$G$            & capital         & group element           \\
$\mathfrak{g}$ & German Fraktur & Lie algebra or subspace, in particular $\fg=\Lie(\mathbf{G})$ \\
$\mathcal{G}$  & capital calligraphic   &  Lie algebra or subspace basis \\
$g$            & normal          & Lie algebra or subspace element        
\end{tabular}
\end{table}

%Let $\mathbf{G}$ be a semi-simple Lie group, and $\mathfrak{g}=T_{e}\mathbf{G}$ its Lie algebra, where $e\in\mathbf{G}$ is the identity element.
%\todo[inline]{HSE@All: Decidir se queremos assumir $\mathbf{G}$ compacto e semissimples de saída no começo; lembro que $\SU(n)$ é semissimples, mas $U(n)$ é redutivo (John: Com as modificações e depois de ler as definições de novo, eu acho que a condição de compacto não fica boa desde o inicio)}

\subsection{The KHK decomposition via involutive automorphisms}
\label{sec: KHK decomp}

\begin{definition}[{\cite[p.  229]{Helgason}}]
    \label{def: orthogonal sym Lie}
    Let $\mathfrak{g}$ be a Lie algebra over $\mathbb{R}$ and let $\theta$ an involutive automorphism of $\mathfrak{g}$, i.e. $\theta\neq I$ and $\theta^2=I$. If the set of fixed points of $\theta$ is a compactly embedded subalgebra of $\mathfrak{g}$, the pair $(\mathfrak{g},\theta)$ is called an \textbf{orthogonal symmetric Lie algebra}.  
\end{definition}

\begin{remark}[{\cite[p.  230]{Helgason}}]
    Let $\theta:\mathfrak{g}\rightarrow\mathfrak{g}$ be an involutive automorphism of a compact semisimple Lie algebra. Then:
\begin{enumerate}[(i)]
    \item $(\fg,\theta)$ is an orthogonal symmetric Lie algebra;

    \item 
    if $\fk\subset \fg$ is any $\theta$-invariant subspace, i.e. such that   $\theta(\mathfrak{k})=\mathfrak{k}$, then $\mathfrak{k}\cap \fZ_{\mathfrak{g}}=\{0\}$, where $\fZ_{\mathfrak{g}}$ denotes the center of $\mathfrak{g}$.
\end{enumerate}
\end{remark}

Let $(\mathfrak{g},\theta)$ be an orthogonal symmetric Lie algebra. Denote by $\mathfrak{k}$ and $\mathfrak{m}$ the eigenspaces of $\theta:\mathfrak{g}\rightarrow\mathfrak{g}$ for the eigenvalues $1$ and $-1$, respectively:

\begin{equation}
\label{eq: eigenv of theta}
 \theta(g)=
    \begin{cases}
        g & \text{if } g\in\mathfrak{k},\\
        -g & \text{if } g\in\mathfrak{m}.
    \end{cases}
\end{equation}
Then  $\mathfrak{g}=\mathfrak{k}\oplus\mathfrak{m}$ and 
\begin{equation}
    [\mathfrak{k},\mathfrak{k}] \subset \mathfrak{k}, \quad[\mathfrak{m},\mathfrak{k}]
    \subset \mathfrak{m}, \quad [\mathfrak{m},\mathfrak{m}]\subset \mathfrak{k}.
\end{equation}

Moreover, denoting by $B$ the bi-invariant Killing form on $\mathfrak{g}$, one has $\mathfrak{m}=\mathfrak{k}^{\perp}:=\fk^{\perp_B}$. 
The decomposition $\mathfrak{g}=\mathfrak{k}\oplus\mathfrak{m}$ is called the \textbf{symmetric decomposition of} $\mathfrak{g}$ associated to the \textbf{symmetric involution} $\theta$. If $B_{\theta}(g_1,g_2):=-B(g_1,\theta(g_2))$ is positive-definite, then the symmetric decomposition and symmetric involution are called the \textbf{Cartan decomposition} and \textbf{Cartan involution}, respectively.

\begin{definition}[{\cite[p.  209]{Helgason}}] 
    Let $(\mathfrak{g},\theta)$ be an orthogonal symmetric Lie algebra with eigenspaces \eqref{eq: eigenv of theta}, subordinate to a connected Lie group 
    $\mathbf{G}$ and a  distinguished Lie subgroup $\mathbf{K}\subset \mathbf{G}$ with Lie algebra $\mathfrak{k}$; then the pair $(\mathbf{G},\mathbf{K})$ is called the \textbf{symmetric pair} of $(\mathfrak{g},\theta)$. For convenience, we denote the corresponding symmetric decomposition  by $(\mathfrak{g}=\fk\oplus\fm,\theta)$.
\end{definition}

\begin{definition}[{\cite[p.  235]{Helgason}}]
    Given a symmetric decomposition  $(\mathfrak{g}=\fk\oplus\fm,\theta)$, we have a real algebra $\mathfrak{g}^*:=\mathfrak{k}\oplus i\mathfrak{m}$. The automorphism defined by 
\begin{equation*}
    \theta^*(k+im)
    :=k-im, 
    \qforq k\in\mathfrak{k}, \; m\in\mathfrak{m}, 
    \end{equation*}
    is an involution of $\mathfrak{g}^*$, and the pair $(\fg^*,\theta^*)=:(\fg,\theta)^*$ is an orthogonal symmetric Lie algebra, called the \textbf{dual} of $(\fg,\theta)$.
\end{definition}

\begin{lemma}
    Let $(\mathfrak{g}=\fk\oplus\fm,\theta)$ be a symmetric decomposition of the Lie algebra $\fg$. If $\fg$ is compact and semisimple, then $\fg^*$ is noncompact and semisimple. Additionally  $\fg^*=\mathfrak{k}\oplus i\mathfrak{m}$ is a Cartan decomposition of $\fg^*$. 
\end{lemma}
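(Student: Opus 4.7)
The plan is to compute the Killing form $B^*$ of $\fg^*$ in terms of the Killing form $B$ of $\fg$, exploiting that both algebras are real forms of the common complexification $\fg_{\mathbb{C}}$. First I would recall that for a real form $\fh\subset\fg_{\mathbb{C}}$ the Killing form of $\fh$ is the restriction of the $\mathbb{C}$-bilinear Killing form $B_{\mathbb{C}}$ of $\fg_{\mathbb{C}}$. Extending $B$ complex-bilinearly and using that $\theta$ is an involutive automorphism (so $B(\fk,\fm)=0$ by the $\pm 1$-eigenspace decomposition), I would derive, for $X=k_1+im_1$, $Y=k_2+im_2\in\fg^*$,
\begin{equation*}
B^*(X,Y) \;=\; B(k_1,k_2) \;-\; B(m_1,m_2).
\end{equation*}

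Next, using compactness and semisimplicity of $\fg$, which by Cartan's criterion makes $B$ negative-definite on $\fg$ and hence on both $\fk$ and $\fm$, I would read off from the above formula that $B^*$ is negative-definite on $\fk$ and positive-definite on $i\fm$. In particular $B^*$ is nondegenerate, so another application of Cartan's criterion yields that $\fg^*$ is semisimple; and since $B^*$ is indefinite (not negative-definite), $\fg^*$ is noncompact.

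For the second claim I would verify the positivity condition defining a Cartan involution. Computing $B^*_{\theta^*}(X,X)=-B^*(X,\theta^*(X))$ with $\theta^*(k+im)=k-im$ gives, by the same bilinear expansion,
\begin{equation*}
B^*_{\theta^*}(k+im,\,k+im) \;=\; -B(k,k)\;-\;B(m,m),
\end{equation*}
which is strictly positive whenever $k+im\neq 0$, again because $B$ is negative-definite on $\fk$ and $\fm$. Hence $\theta^*$ is a Cartan involution and $\fg^*=\fk\oplus i\fm$ is the associated Cartan decomposition.

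The main subtlety, and the only place requiring some care, is justifying the identity $B^*=B_{\mathbb{C}}|_{\fg^*\times\fg^*}$ and the resulting bilinear expansion: one must distinguish the real Killing forms of the two real forms from the complex Killing form of $\fg_{\mathbb{C}}$, and check that $\theta$-invariance really does kill the cross-terms $B(k,m)$. Once this is pinned down the rest is a direct sign count; no deeper structure theory is needed.
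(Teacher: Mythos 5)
Your proposal is correct and its core computation --- identifying the Killing form of $\fg^*$ with the restriction of the complexified Killing form and expanding $B_{\theta^*}(k+im,k+im)=-B(k,k)-B(m,m)$ to get positive-definiteness --- is exactly the paper's argument for the Cartan-decomposition claim. The only difference is that the paper simply cites Helgason (Proposition 2.1) for semisimplicity and noncompactness of $\fg^*$, whereas you derive both from the signature of $B^*$ (nondegenerate, hence semisimple; indefinite since $\fm\neq 0$ because $\theta\neq I$, hence noncompact), which is a sound and self-contained shortcut.
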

\begin{proof}
  The proof of the first part can be found in \cite[Proposition 2.1] {Helgason}. Now, $\fg$ and $\fg^*$ are real forms of the complexification $\fg^{\mathbb{C}}$, so their Killing forms are restrictions of the Killing form of $\fg^{\mathbb{C}}$. Then,   

  \begin{equation*}
      B_{\theta^*}(x+iy,x+iy)=-B(x+iy,x-iy)=-B(x,x)-B(y,y),
  \end{equation*}
  for all $x+iy\in \fk\oplus i\fm$. Since $\fg$ is compact and semisimple, $B$ is negative definite; therefore, $B_{\theta^*}$ is positive definite.  
\end{proof}

For a symmetric decomposition $(\mathfrak{g}=\mathfrak{k}\oplus\mathfrak{m},\theta)$ of $\fg$ wiht symmetric pair $(\mathbf{G},\mathbf{K})$, we assume $\mathbf{K}$ to be a closed and compact subgroup of $\mathbf{G}$. There is an involutive automorphism $\Theta:\mathbf{G}\rightarrow \mathbf{G}$ with differential $\theta$, such that  

\begin{equation}
\label{eq: Theta inv}
    \Theta|_{\mathbf{K}}=\mathrm{id},
\qquad
\Theta(\exp m)=\exp(-m), \quad \forall m\in\mathfrak{m}.
\end{equation}

\begin{remark}
    Suppose $\Theta:\mathbf{G}\rightarrow \mathbf{G}$ is an analytic involutive automorphism of a 
    compact simply connected Lie group. Then $\mathbf{K}$ is connected \cite[Theorem 8.2]{Helgason}, and the closed and compact subgroup $\mathbf{K}\subset \mathbf{G}$ is, in fact, totally spanned by the exponential map: $\mathbf{K}=\exp(\mathfrak{k})$.
\end{remark}

An interesting property is that, given a maximal Abelian subalgebra $\mathfrak{h}\subset\mathfrak{m}$, it is possible to obtain all of $\mathfrak{m}$ via the adjoint action of $\mathbf{K}$. The decomposition $\mathfrak{g}=\mathfrak{k}\oplus\mathfrak{m}$ of $\mathfrak{g}$ induces a decomposition in $\mathbf{G}$, which we call the KHK decomposition.

\begin{theorem}[KHK decomposition, {\cite[Theorem 6.7]{Helgason}}]
{\label{theo: KHK dcp}}
    Let  $(\fg=\fk\oplus \fm,\theta)$ be a Cartan decomposition of a semisimple Lie algebra, with symmetric pair $(\mathbf{G},\mathbf{K})$, and let $(\mathfrak{g}^*=\fk+i\fm,\theta^*)$ be the symmetric decomposition of the corresponding dual Lie algebra with symmetric pair $(\mathbf{G}^*,\mathbf{K}^*)$. Fixing a maximal Abelian subalgebra $\mathfrak{h}\subset \mathfrak{m}$, the following factorisation holds for any $G\in \mathbf{G}$:
\begin{equation*}
    G=K_0\exp(m) =K_0K_1\exp(h)K_{1}^{\dagger},
\end{equation*}
    where $K_0,K_1 \in \mathbf{K}$, $h\in\mathfrak{h}$, and $m=K_1hK_{1}^{\dagger}\in \mathfrak{m}$. Furthermore, for any $Q\in \mathbf{G}^*$, we have:
\begin{equation*}
    Q=K_2\exp(m_1) =K_2K_3\exp(h_1)K_{3}^{\dagger},
\end{equation*}
    where $K_2,K_3 \in \mathbf{K}^*$, $h_1\in\mathfrak{ih}$ and $m_1=K_3h_1K_{3}^{\dagger}\in \mathfrak{im}$.
\end{theorem}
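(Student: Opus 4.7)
The plan is to establish the two-factor decomposition in Steps 1--2 below, and then transfer verbatim to the dual pair in Step 3.

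\textbf{Step 1: Group-level factorisation $G = K_0 \exp(m)$.} In the noncompact case $(\mathbf{G}^*, \mathbf{K}^*)$, the classical statement is that the map $\mathbf{K}^* \times i\fm \to \mathbf{G}^*$, $(K, X) \mapsto K \exp(X)$, is a global diffeomorphism; this relies on the positive-definiteness of $B_{\theta^*}$ just proved, together with the nonpositive sectional curvature of the symmetric space $\mathbf{G}^*/\mathbf{K}^*$. In the compact case $(\mathbf{G}, \mathbf{K})$, the analogous map $\mathbf{K} \times \fm \to \mathbf{G}$ is no longer injective, but remains surjective (every $G \in \mathbf{G}$ can be joined to $\mathbf{K}$ by a geodesic whose tangent at $\mathbf{K}$ lies in $\fm$), which is all that is required since no uniqueness is asserted.

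\textbf{Step 2: Conjugating $m$ into $\fh$.} The decisive Lie-theoretic input is
\begin{equation*}
 \fm = \bigcup_{K \in \mathbf{K}} \Ad(K)\, \fh,
\end{equation*}
the symmetric-space analogue of the conjugacy of maximal tori in a compact Lie group. The standard proof inspects the $\mathbf{K}$-orbit of a regular $h_0 \in \fh$: the tangent map of $K \mapsto \Ad(K) h_0$ at the identity has image $[\fk, h_0]$, which by the restricted root-space decomposition of $\fm$ with respect to $\fh$ fills a subspace complementary to $\fh$; hence the orbit is open, and by compactness of $\mathbf{K}$ (or a properness argument on the dual side) closed, so it exhausts $\fm$. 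Granted this, choose $K_1 \in \mathbf{K}$ with $\Ad(K_1^{*}) m = h \in \fh$, equivalently $m = K_1 h K_1^{*}$, and apply naturality of $\exp$:
\begin{equation*}
 \exp(m) \;=\; \exp\bigl(\Ad(K_1)\, h\bigr) \;=\; K_1 \exp(h)\, K_1^{*}.
\end{equation*}
Combining with Step~1 yields $G = K_0 K_1 \exp(h) K_1^{*}$, as desired.

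\textbf{Step 3: Dual statement.} The pair $(\fg^*, \theta^*)$ is itself an orthogonal symmetric Lie algebra with symmetric pair $(\mathbf{G}^*, \mathbf{K}^*)$ and Cartan subspace $i\fh \subset i\fm$, so Steps 1--2 apply verbatim, producing $Q = K_2 K_3 \exp(h_1) K_3^{*}$ with $K_2, K_3 \in \mathbf{K}^*$ and $h_1 \in i\fh$.

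The main obstacle is Step~2: once the conjugacy $\fm = \Ad(\mathbf{K})\fh$ is in hand, the rest is the elementary bookkeeping of translating the additive eigenspace decomposition $\fg = \fk \oplus \fm$ into its multiplicative counterpart via $\exp$. I would therefore lean on the classical development in Helgason (Chapter~V, especially the material culminating in Theorem~6.7) for the detailed justification of the conjugacy statement, and restrict the body of the proof to verifying that the hypotheses of the cited results hold in both the compact and noncompact cases of interest here.
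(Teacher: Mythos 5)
The paper does not actually prove this theorem: it is quoted directly from Helgason (Ch.~V, Theorem~6.7 together with the global Cartan decomposition $\mathbf{G}=\mathbf{K}\exp(\fm)$), and your proposal correctly identifies exactly those two ingredients --- surjectivity of $(K,X)\mapsto K\exp(X)$ and the conjugacy $\fm=\bigcup_{K\in\mathbf{K}}\Ad(K)\fh$ --- so in structure your outline is the standard argument and is consistent with how the paper treats the result.

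There is, however, one step in your sketch of the conjugacy proof that would fail as written. You argue that the tangent map of $K\mapsto \Ad(K)h_0$ at the identity has image $[\fk,h_0]$, ``which \ldots fills a subspace complementary to $\fh$; hence the orbit is open.'' A subspace complementary to $\fh$ is a \emph{proper} subspace of $\fm$ whenever $\fh\neq 0$, so the single orbit $\Ad(\mathbf{K})h_0$ is never open in $\fm$; it is a submanifold of dimension $\dim[\fk,h_0]<\dim\fm$. The open-and-closed argument must instead be applied to the saturation: the map $\mathbf{K}\times\fh\to\fm$, $(K,h)\mapsto\Ad(K)h$, has differential at $(e,h_0)$ with image $[\fk,h_0]+\fh=\fm$ (by the restricted root-space decomposition and regularity of $h_0$), so the set $\Ad(\mathbf{K})\fh$ contains a neighbourhood of each regular point it meets, and compactness of $\mathbf{K}$ plus density of regular elements then gives $\Ad(\mathbf{K})\fh=\fm$. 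Equivalently, the cleaner classical route (and the one Helgason essentially takes) is variational: for $m\in\fm$ and regular $h_0\in\fh$, a critical point $K$ of $K\mapsto B(\Ad(K)m,h_0)$ satisfies $[\Ad(K)m,h_0]=0$, forcing $\Ad(K)m$ into the centraliser of $h_0$ in $\fm$, which is $\fh$. With that correction (or with the explicit deferral to Helgason that you already make), the rest of your argument --- naturality of $\exp$ under $\Ad$, invariance of $\fm$ under $\Ad(\mathbf{K})$ so that $K_1hK_1^*\in\fm$, and the verbatim transfer to the dual pair $(\fg^*,\theta^*)$ with Cartan subspace $i\fh$ --- goes through.
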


\subsection{The Khaneja-Glaser special unitary bases}
\label{sec: Khaneja-Glaser bases}

From now on, we will focus on the Lie group $\mathbf{G}=\SU(2^n)$, whose Lie algebra $\mathfrak{g}=\su(2^n)$ consists of the traceless, skew-Hermitian matrices. To implement the KHK decomposition of $G\in \SU(2^n)$, we will make use of the Khaneja-Glaser bases of $\su(2^n)$, which are built upon the Pauli matrices

$$X=\begin{pmatrix}
    0 & 1\\
    1 & 0
\end{pmatrix}, \hspace{0.3cm} Y=\begin{pmatrix}
    0 & -i\\
    i & 0
\end{pmatrix}, \hspace{0.3cm} Z=\begin{pmatrix}
    1 & 0\\
    0 & -1
\end{pmatrix}.$$

\begin{definition} \label{def: KG basis}
    The \textbf{Khaneja-Glaser basis} for $\su(2^n)=\mathfrak{k}_n\oplus\mathfrak{m}_n$ with $\mathfrak{k}_n=\Span (\mathcal{K}_n)$ and $\mathfrak{m}_n=\Span (\mathcal{M}_n)$  is constructed recursively, as follows. Starting from
\begin{equation*}
    \mathcal{M}_2:=\frac{i}{2}\bigcup_{\alpha,\beta\in\{X, Y, Z\}}\{\alpha\otimes\beta\}, 
    \qandq 
    \mathcal{K}_2:=\frac{i}{2}\bigcup_{\alpha\in\{X, Y, Z\}}\{\alpha\otimes I, I\otimes \alpha\},
\end{equation*}
    we set    
\begin{equation*}
    \mathcal{G}_n
    :=\mathcal{M}_n\cup\mathcal{K}_n,
\end{equation*}
    and, by recursion,
\begin{align*}
    \mathcal{M}_n
    &=\left\{\frac{i}{2}I^{\otimes (n-1)}\otimes X,\frac{i}{2}I^{\otimes (n-1)}\otimes Y, \mathcal{G}_{n-1}\otimes X, \mathcal{G}_{n-1}\otimes Y\right\},\\
    \mathcal{K}_{n,0}
    &= \mathcal{G}_{n-1}\otimes I  \qandq  \mathcal{K}_{n,1}=\mathcal{G}_{n-1}\otimes Z,\\
    \mathcal{K}_n
    &=\left\{\frac{i}{2}I^{\otimes (n-1)}\otimes Z, \mathcal{K}_{n,0}, \mathcal{K}_{n,1}\right\}.
\end{align*}
    Similarly, we can define a maximal abelian subalgebra $\text{span}(\mathcal{H}_n)=\mathfrak{h}\subset\mathfrak{m}$ starting from
$\mathcal{H}_2:=\frac{i}{2}\bigcup_{\alpha\in\{X, Y, Z\}}\{\alpha\otimes\alpha\}$, set 
\begin{align*}
    \Bar{\mathcal{H}}_n
    &=\bigcup_{j=2}^{n-1}\left\{\mathcal{H}_j\otimes I^{\otimes(n-1-j)}\right\}%\\
    \qandq 
    \mathcal{H}_n
    =\left\{\frac{i}{2}I^{\otimes(n-1)} \otimes X, \Bar{\mathcal{H}}_n\otimes X\right\},
\end{align*}
which will be used for the KHK decomposition later on.
\end{definition}

Note that
\begin{equation*}
    I^{\otimes k}=\underbrace{I\otimes \dots \otimes I}_{k},
\end{equation*}
and that, given a set $S$ and a matrix $A$, we denote by $S\otimes A$ the set obtained by taking the Kronecker product of each element in $S$ with the matrix $A$. 

In the case of $\su(4)$, we can visualize the Khaneja-Glaser basis and the corresponding subspaces in Figure \ref{fig: su4 basis}, and more generally, for $n>2$ qubits, in Figure \ref{fig: Khaneja-Glaser basis 2N}.
\begin{figure}[h]
    \centering
    \includegraphics[width=.5\linewidth]{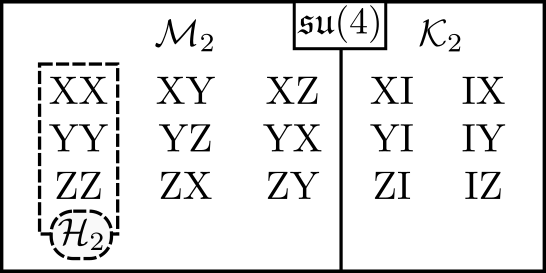}
    \caption{Khaneja-Glaser basis for $\su(4)$. We denote the element $\frac{i}{2}A\otimes B$ by $AB$ (adapted from \cite{SaEarp2005}).}
    \label{fig: su4 basis}
\end{figure}

\begin{figure}[h]
    \centering
    \includegraphics[width=\linewidth]{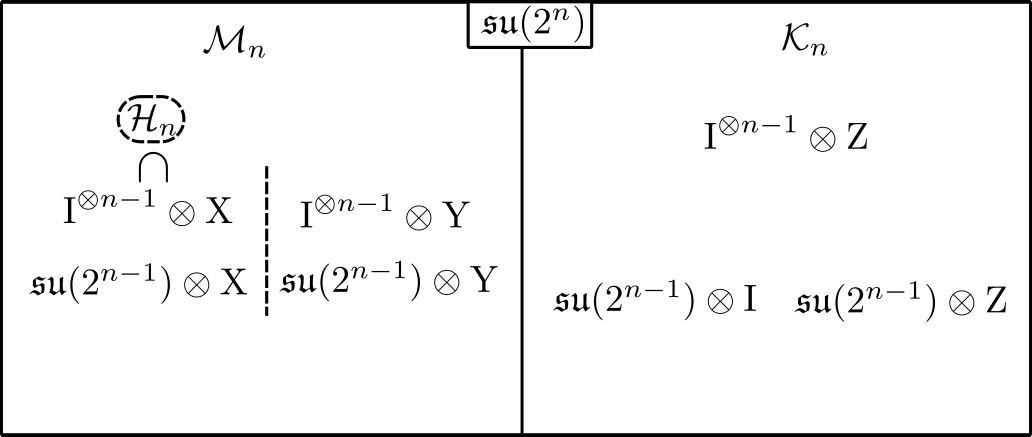}
    \caption{Khaneja-Glaser basis for $\su(2^n)$, again omitting the factor $\frac{i}{2}$ (adapted from \cite{SaEarp2005}).}
    \label{fig: Khaneja-Glaser basis 2N}
\end{figure}

To determine the KHK decomposition of some $G\in \SU(2^n)$, that is, to compute $K_0\in\mathbf{K}$ and $m\in\Span (\mathcal{M}_n)$ such that  $G=K_0\exp(m)$, we created an algorithm based on \cite{SaEarp2005} with elements from \cite{Nakajima2005ANA}.
The construction of the Khaneja-Glaser basis (Definition  \ref{def: KG basis}) induces the decomposition $\su(2^n)=\mathfrak{k}_n\oplus\mathfrak{m}_n$, associated with the involutive automorphism 
\begin{equation}
    \theta_Z(g)=(I^{\otimes n-1}\otimes Z)g(I^{\otimes n-1}\otimes Z),
\end{equation}
that is, for $n>2$, $(\su(2^n),\theta_Z)$ is an orthogonal symmetric Lie algebra (see Def.~\ref{def: orthogonal sym Lie}) and $\mathfrak{k}_n$ and $\mathfrak{m}_n$ are the eigenspaces of $\theta_Z$ for the $\pm1$ eigenvalues respectively. Furthermore, 
\begin{equation}
    \Theta_Z(G)=(I^{\otimes n-1}\otimes Z)G(I^{\otimes n-1}\otimes Z)
\end{equation}
is the involutive automorphism of Lie group $\SU(2^n)$ such that $d\Theta_Z=\theta_Z$ and that satisfies  \eqref{eq: Theta inv} (see \cite{Nakajima2005ANA}).

\begin{proposition}
\label{prop: m involution trick}
    Let $\mathbf{G}$ be a semisimple connected Lie group and $(\mathfrak{g},\theta)$ an orthogonal symmetric Lie algebra with symmetric pair $(\mathbf{G},\mathbf{K})$. Let $G\in\mathbf{G}$ such that $G=K_0\exp(m)$, as in Theorem \ref{theo: KHK dcp}, then 
\begin{equation}
\label{prop: m involution trick eq}
    \exp(2m)=\Theta(G^\dagger)G,
\end{equation}
    with $\Theta$ an involutive automorphism of $\mathbf{G}$ satisfying $d\Theta=\theta$ and  \eqref{eq: Theta inv}. 
\end{proposition}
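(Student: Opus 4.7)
The plan is to expand $G^{*}$ using the factorisation $G=K_{0}\exp(m)$, then apply $\Theta$ factor-by-factor via its homomorphism property, and finally cancel the $\mathbf{K}$-factors against each other. The key facts I will use from the excerpt are: (a) $\Theta$ is a group homomorphism with $d\Theta=\theta$; (b) $\Theta$ fixes $\mathbf{K}$ pointwise; and (c) $\Theta$ acts on $\exp(\mathfrak{m})$ by $G\mapsto G^{*}$, which on the compact Lie group $\mathbf{G}$ coincides with group inversion on that subset (consistent with $\theta(m)=-m$ for $m\in\mathfrak{m}$).

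First I would take the adjoint of the identity $G=K_{0}\exp(m)$ to obtain $G^{*}=\exp(m)^{*}K_{0}^{*}=\exp(-m)\,K_{0}^{-1}$, using that $K_{0}^{*}=K_{0}^{-1}\in\mathbf{K}$ (since $\mathbf{K}$ is closed under inverses) and that $\exp(m)^{*}=\exp(-m)\in\exp(\mathfrak{m})$ (since $\mathfrak{m}$ is closed under negation). Next, applying $\Theta$ and using that it is a homomorphism,
\begin{equation*}
\Theta(G^{*})=\Theta\bigl(\exp(-m)\bigr)\,\Theta\bigl(K_{0}^{-1}\bigr).
\end{equation*}
By \eqref{eq: Theta inv}, $\Theta(K_{0}^{-1})=K_{0}^{-1}$ because $K_{0}^{-1}\in\mathbf{K}$, and $\Theta(\exp(-m))=\exp(-m)^{*}=\exp(m)$ because $\exp(-m)\in\exp(\mathfrak{m})$. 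Hence $\Theta(G^{*})=\exp(m)\,K_{0}^{-1}$.

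The proof then concludes by right-multiplying by $G$:
\begin{equation*}
\Theta(G^{*})\,G=\exp(m)\,K_{0}^{-1}\,K_{0}\,\exp(m)=\exp(m)\exp(m)=\exp(2m),
\end{equation*}
where in the last step I use that $m$ commutes with itself, so the BCH series collapses to a sum.

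The only delicate point is the interpretation of $^{*}$ on a general element of $\mathbf{G}$: the identity \eqref{eq: Theta inv} only prescribes $\Theta$ on $\mathbf{K}$ and on $\exp(\mathfrak{m})$, so one has to make sure that the adjoint $G^{*}$ appearing in the statement, when applied to $G=K_{0}\exp(m)$, decomposes as a product of one element of $\exp(\mathfrak{m})$ and one element of $\mathbf{K}$ on which $\Theta$ is already known. This is exactly what the reversed factorisation $G^{*}=\exp(-m)K_{0}^{-1}$ accomplishes, and it is the single step of the argument that actually uses the structure of the KHK decomposition rather than purely formal properties of $\Theta$.
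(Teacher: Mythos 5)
Your proof is correct and follows essentially the same route as the paper's: expand $G^{*}=\exp(-m)K_{0}^{*}$, apply $\Theta$ factor-by-factor using \eqref{eq: Theta inv}, and cancel the $\mathbf{K}$-factors. The only cosmetic difference is that you write $K_{0}^{-1}$ where the paper writes $K_{0}^{*}$; since $\mathbf{G}\subset\SU(2^{n})$ these coincide, and your closing remark about why $G^{*}$ lands in a form where $\Theta$ is prescribed is a fair observation that the paper leaves implicit.
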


\begin{proof}
    From  \eqref{eq: Theta inv}, we have    
    \begin{equation*}
        \Theta(G^\dagger)=\Theta(\exp(-m)K_{0}^\dagger)=\Theta(\exp(-m))\Theta(K_{0}^\dagger)=\exp(m)K_{0}^\dagger.
    \end{equation*}
Therefore,
\begin{align*}
    \Theta(G^\dagger)G
    &=(\exp(m)K_{0}^\dagger)(K_0\exp(m))\\
    &=\exp(m)^2.
    \qedhere
\end{align*}
\end{proof}

In principle, Proposition \ref{prop: m involution trick} offers a practical method for determining the element $m\in\mathfrak{m}$ in the KHK decomposition (Thm.~\ref{theo: KHK dcp}). However, in practice, numerically computing the logarithm of an $\SU(2^n)$ matrix involves diagonalization and the logarithm of each eigenvalue \cite{Liu_2021}, for which may not be a unique answer. In general, this process can lead to the log of the matrix lying outside the algebra\footnote{Notice, for example, that when calculating the log of $-I$, the most natural answer would have $\tr(\log(-I))\neq 0$.}. 

\begin{proposition}
Let $(\su(2^n),\theta_U)$ be an orthogonal symmetric Lie algebra, with $U \in \mathrm{SU}(2^n)$ and
\[
\theta_U(g) := \operatorname{Ad}_U(g).
\]
Then
\begin{equation}
m = \frac{1}{2}\log\!\bigl(\Theta(G^{*})\,G\bigr)
\label{eq:m-solution}
\end{equation}
solves~\eqref{prop: m involution trick eq} in Proposition~\ref{prop: m involution trick} under the assumption that $G$ has no negative eigenvalues.
\end{proposition}

\begin{proof}
It suffices to show that if $m\in\mathfrak{m}$,, then $\log(e^m)\in\mathfrak{m}$. Let us observe that
\begin{equation*}
    \log(G)^{\dagger}=\log(G^{\dagger})
    =-\log(G),
    \quad\forall G\in \SU(2^n), 
\end{equation*} 
hence $\log(G) \in \mathfrak{u}(2^n)$. As
\[
\mathfrak{u}(2^n) = \mathfrak{su}(2^n) \oplus \mathfrak{u}(1),
\]
there exist $g \in \mathfrak{su}(2n)$ and $\varphi \in \mathbb{R}$ such that
\[
\log(G) = g + i\varphi I.
\]

 Now, 
\begin{equation*}
    \theta_U(\log(e^m))
    =U\log(e^m)U^\dagger=\log(\Theta(e^m))=-\log(e^m).
\end{equation*}
    Since $\theta_{U}(i\varphi I)=i\varphi I$, then $\log(e^m)\in\mathfrak{m}$.
\end{proof}
\begin{remark}
In practice, if the spectrum of $\Theta(G^\dagger)G$ contains no negative real eigenvalues, we compute the matrix logarithm using the principal branch.
If $\Theta(G^\dagger)G$ admits negative eigenvalues, the principal logarithm is not well-defined, and we instead calculate the logarithm with a brute force optimization, to guarantee it lands in the correct branch of the algebra.
Let $\mathcal{G}=\{G_1,\dots,G_p\}$ be a basis of the target algebra and denote by $\mathrm{span}(\mathcal{G})$ its linear span. We then define
\[
\log(G)\;:=\;\arg\min_{g\in \mathrm{span}(\mathcal{G})}
\bigl\|\,G-\exp(g)\bigr\|_F .
\]
Equivalently, writing $g=\sum_{i=1}^p a_i G_i$, we optimize over $a\in\mathbb{R}^p$.
\end{remark}

% \todo[inline]{HSE@All: talvez eu esteja perdendo alguma coisa, mas se $\log G$ está definido, já não é óbvio que $\log G\in \mathfrak{su}(2^n)$, com $\varphi=0$? (esse é o probema, que o log não esta bem definido sobre $\SU(2^n)$, por exemplo para a matriz $-I$, $tr(log(-I))\neq 0$, mas a gente encontrou esse jeito de aplicar o $log$ mesmo com esse problema.)}

Obtaining $m\in\mathfrak{m}_n$ in the decomposition $G=K_0\exp(m)$ of $G\in \SU(2^n)$, we then find $K_0=G\exp(-m)$. To complete the KHK decomposition we use the strategy proposed by in \cite{SaEarp2005} to determine $K_1\in \mathbf{K}$ and $h$ in the maximal Abelian subalgebra $\mathfrak{h}_n=\Span (\mathcal{H}_n)$, such that $m=K_1hK_{1}^\dagger$. However, this does not yet allow for the recursive implementation of our decomposition, since we haven't produced and element in $\su(2^{n-1})$. For that we will need to introduce a few new subspaces and a subsequent KHK decomposition, cf. Figure \ref{fig:fn construction}.
\begin{figure}[h]
    \centering

\begin{framed}
    \begin{equation}
         \mathcal{K}_{n,0}
         =\mathcal{G}_{n-1}\otimes I, \quad \mathcal{K}_{n,1}
         =\mathcal{G}_{n-1}\otimes Z,
    \end{equation}
    \begin{equation}
        \mathcal{F}_2
        =\{0\}, \quad \mathcal{F}_n
        =\left\{\left\{\bigcup_{j=2}^{n-1} \mathcal{H}_{j}\otimes I^{\otimes(n-1-j)}\right\} \otimes Z\right\}
    \end{equation}
\end{framed}
    \caption{Construction of the maximal Abelian subalgebra basis}
    \label{fig:fn construction}
\end{figure}

\begin{figure}[h]
    \centering
    \includegraphics[width=\linewidth]{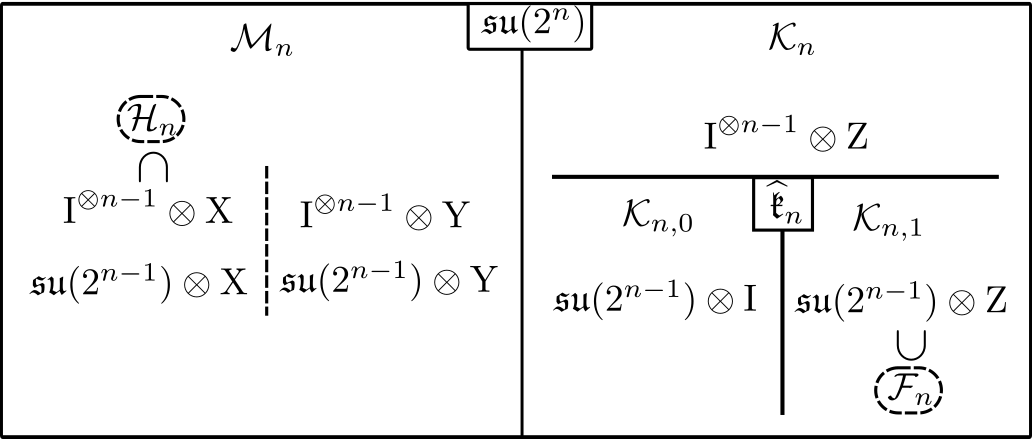}
    \caption{Decomposition of $\su(2^n)$ into both Cartan pairs. The elements $I^{\otimes n-1}\otimes A$, $A=X, Y, Z$ are multiplied by $\frac{i}{2}$ (adapted from \cite{SaEarp2005}).}
    \label{fig:fn algebra}
\end{figure}

Notice how, for $n>2$, we can decompose the  subalgebra $\mathfrak{k}_n$ into
\begin{align*}
\begin{split}
    \mathfrak{k}_n
    &=\Span (\mathcal{K}_{n,0})\oplus \Span (\mathcal{K}_{n,1})\oplus \Span \left(\frac{i}{2}I^{\otimes(n-1)}\otimes Z\right)\\
    &\simeq \su(2^{n-1})\oplus\su(2^{n-1})\oplus \mathfrak{u(1)}.
\end{split}
\end{align*}

Let  $\widehat{\mathfrak{k}}_n:=\Span (\mathcal{K}_{n,0})\oplus \Span (\mathcal{K}_{n,1})$. Since $\left[I^{\otimes(n-1)}\otimes Z,\mathfrak{k}_n\right]=0$, and $\mathbf{K}=\exp(\mathfrak{k}_n)$, for any $G\in \SU(2^n)$ we have 
\begin{align*}
\begin{split}
    G&=K_0K_1\exp(h)K_1^{\dagger}\\
    &= \widehat{K_0K_1}\left(I^{\otimes n-1}\otimes\exp\left(\frac{i\alpha}{2}Z\right)\right) \exp(h)\widehat{K_1^{\dagger}}\left(I^{\otimes n-1} \otimes\exp\left(\frac{i\beta}{2}Z\right)\right),
\end{split}
\end{align*}
where $\alpha,\beta\in \mathbb{R}$ and $\widehat{K_0K_1}, \widehat{K_1^{\dagger}}\in exp(\widehat{\mathfrak{k}}_n$).

These new subspaces open the way for a secondary decomposition, using a different involution proposed by \cite{Drury_2008}, albeit they were applying it to a different maximal Abelian subalgebra. Letting
\begin{equation}
    \theta_X(g):=(I^{\otimes n-1}\otimes X)g(I^{\otimes n-1}\otimes X), \qforq 
    g\in \widehat{\mathfrak{k}}_n,
\end{equation}
$(\widehat{\mathfrak{k}}_n, \theta_X)$ is an orthogonal symmetric Lie algebra with decomposition $\widehat{\mathfrak{k}}_n=\mathfrak{k}_{n,0}\oplus \mathfrak{k}_{n,1}$, where 
$$
\mathfrak{k}_{n,0}=\Span (\mathcal{K}_{n,0})
\qandq
\mathfrak{k}_{n,1}=\Span (\mathcal{K}_{n,1}). 
$$ 

We can visualize these new subspaces in Figure \ref{fig:fn algebra}. Now take the maximal Abelian subalgebra $\mathfrak{f}_n:=\Span (\mathcal{F}_{n})\subset \mathfrak{k}_{n,1} $, as constructed in Figure \ref{fig:fn construction}. These choices allow one to decompose the elements $\widehat{K_0K_1}$ and $\widehat{K_1^{\dagger}}$, using Proposition \ref{prop: m involution trick} and the fact that 
\begin{equation}
    \Theta_X(G)=(I^{\otimes n-1}\otimes X)G(I^{\otimes n-1}\otimes X)
\end{equation}
is an involutive automorphism of the  compact semisimple simply connected Lie group whose Lie algebra is $\widehat{\mathfrak{k}}_n$, satisfying $d\Theta_X=\theta_X$ and \eqref{eq: Theta inv}. We illustrate the new subspace, alongside the spaces selected by the two involutions for the case of $\mathfrak{su}(8)$ in Fig.~\ref{fig:su8}.

\begin{figure}
    \centering
    \includegraphics[width=\linewidth]{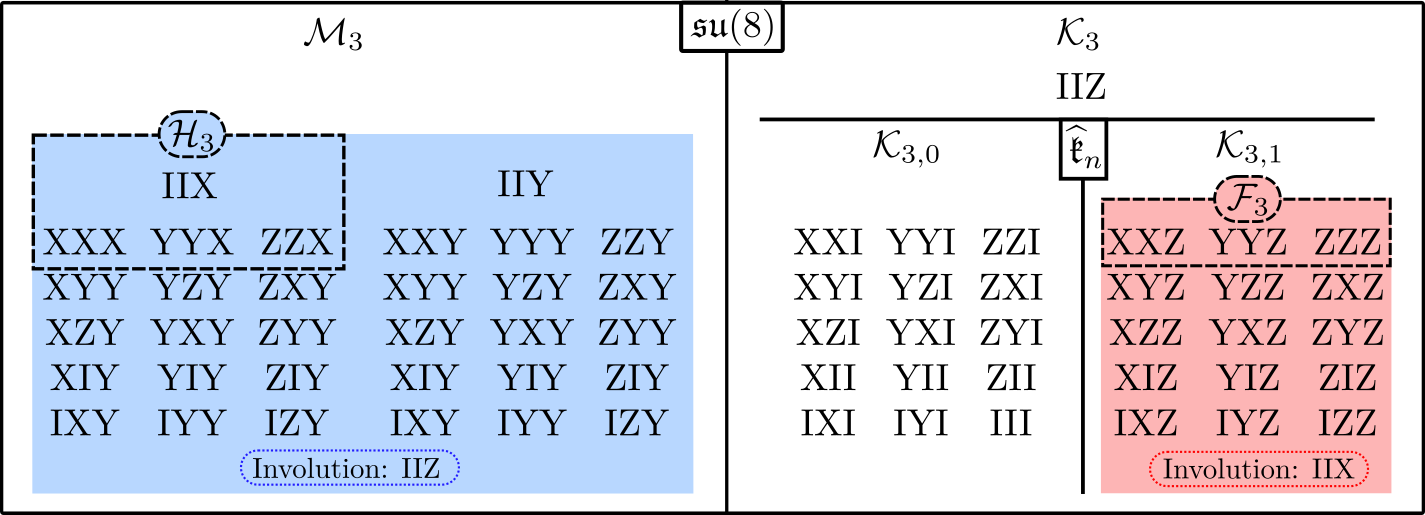}
    \caption{Khaneja-Glaser basis for $\su(8)$, again omitting the factor $\frac{i}{2}$. We also indicate the subspaces that are selected by the involutions we introduce in this paper.}
    \label{fig:su8}
\end{figure}

\begin{remark}
     Any $K\in\exp(\mathfrak{k}_n)$ can be factored as   $K=\widehat{K}\widetilde{K}$, with $\widehat{K}\in\exp(\widehat{\mathfrak{k}}_n)$ and 
     $$\widetilde{K}\in\exp\left(\Span \left(\frac{i}{2}I^{\otimes(n-1)}\otimes Z\right)\right).$$
     Furthermore, $K=K_{1,0}\exp(\widehat{m})\widetilde{K}$, where $K_{1,0}\in \exp(\mathfrak{k}_{n,0})$ and $\widehat{m}\in\mathfrak{k}_{n,1}$. Observe that $\Theta_X(\widetilde{K})=\widetilde{K}^\dagger$, thus 
\begin{equation}
\label{remark: involution and phase}
    \left(\exp(\widehat{m}\right) \widetilde{K})^2 =\exp(2m_Z)  =\Theta_X(K^\dagger)K,
\end{equation}
    with $m_Z\in \mathfrak{k}_{n,1}\oplus\Span (\frac{i}{2}I^{\otimes(n-1)}\otimes Z)$.
\end{remark}

\begin{remark}
    Analogously to equation \eqref{eq:m-solution},  
    $m_Z=\frac{1}{2}\log(\Theta_X(K^\dagger)K)$ 
    satisfies \eqref{remark: involution and phase}.
\end{remark}

Implementing a $\mathbf{KHK}$ decomposition on $\widehat{K_0K_1}$ and $\widehat{K_{1}^\dagger}$ yields

\begin{equation}
    \widehat{K_0K_1}= K_{1,0}K_{1,1}\exp\left(f^{(0)}\right)K_{1,1}^{\dagger},
\end{equation}
\begin{equation}
    \widehat{K_1^{\dagger}}= K_{2,0}K_{2,1}\exp\left(f^{(1)}\right)K_{2,1}^{\dagger},
\end{equation}
with $f^{(0)},f^{(1)}\in \mathfrak{f}_n$ and $K_{i,j}\in \SU(2^{n-1})\otimes SU(2)$. The $\SU(2^{n-1})$ matrices can be then obtained by removing the even rows and columns, taking into consideration a possible phase correction as described in Appendix~\ref{appendix: phase}.

Building upon the discussions in this section, we now obtain the Khaneja-Glaser decomposition as a consequence of  Theorem \ref{theo: KHK dcp}:

\begin{corollary}[Khaneja-Glaser decomposition]
\label{def: cartann decomp}
    Any $G\in \SU(2^n)$ can be factored as
\begin{equation*}
    G=K^{(0)}\otimes I \cdot e^{f^{(0)}} \cdot K^{(1)}\otimes I \cdot I^{\otimes (n-1)}\otimes\widetilde{K}^{(0)} \cdot  e^{h^{(0)}} \cdot K^{(2)}\otimes I\cdot  e^{f^{(1)}} \cdot K^{(3)}\otimes I \cdot I^{\otimes (n-1)}\otimes\widetilde{K}^{(1)} 
\end{equation*}
    for some $K^{(i)}\in \SU(2^{n-1})$, $\widetilde{K}^{(i)}\in \SU(2)$, $h^{(0)}\in \mathfrak{h}_n$, $f^{(i)}\in \mathfrak{f}_n$.    
\end{corollary}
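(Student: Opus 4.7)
The plan is to invoke Theorem~\ref{theo: KHK dcp} twice in succession: first globally on $\SU(2^n)$ with respect to the symmetric decomposition $(\su(2^n)=\mathfrak{k}_n\oplus\mathfrak{m}_n,\theta_Z)$ and Cartan subalgebra $\mathfrak{h}_n$, and then inside the subgroup $\exp(\widehat{\mathfrak{k}}_n)$ with respect to $(\widehat{\mathfrak{k}}_n=\mathfrak{k}_{n,0}\oplus\mathfrak{k}_{n,1},\theta_X)$ and Cartan subalgebra $\mathfrak{f}_n\subset\mathfrak{k}_{n,1}$. The preceding discussion has already verified that both pairs are orthogonal symmetric Lie algebras over compact semisimple Lie algebras with the declared Cartan subalgebras, so both applications are legitimate.

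Applying Theorem~\ref{theo: KHK dcp} first to $G\in\SU(2^n)$ yields $K_0,K_1\in\exp(\mathfrak{k}_n)$ and $h^{(0)}\in\mathfrak{h}_n$ with $G=K_0K_1\,e^{h^{(0)}}\,K_1^*$. The key bookkeeping step is to peel off the central $\U(1)$-direction: since $\mathfrak{k}_n$ splits as a direct sum of commuting ideals $\widehat{\mathfrak{k}}_n\oplus\Span\left(\frac{i}{2}I^{\otimes(n-1)}\otimes Z\right)$, every element of $\exp(\mathfrak{k}_n)$ factors uniquely as $\widehat{K}\cdot(I^{\otimes(n-1)}\otimes e^{i\gamma Z/2})$, with the two pieces commuting. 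Extracting such phases $\alpha,\beta\in\R$ from $K_0K_1$ and $K_1^*$ allows one to rewrite
$$G=\widehat{K_0K_1}\cdot(I^{\otimes(n-1)}\otimes e^{i\alpha Z/2})\cdot e^{h^{(0)}}\cdot\widehat{K_1^*}\cdot(I^{\otimes(n-1)}\otimes e^{i\beta Z/2}).$$

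Applying Theorem~\ref{theo: KHK dcp} a second time to each of $\widehat{K_0K_1},\widehat{K_1^*}\in\exp(\widehat{\mathfrak{k}}_n)$ then produces
$$\widehat{K_0K_1}=K_{1,0}K_{1,1}\,e^{f^{(0)}}\,K_{1,1}^*,\qquad \widehat{K_1^*}=K_{2,0}K_{2,1}\,e^{f^{(1)}}\,K_{2,1}^*,$$
with $f^{(0)},f^{(1)}\in\mathfrak{f}_n$ and all $K_{i,j}\in\exp(\mathfrak{k}_{n,0})$. Because $\mathfrak{k}_{n,0}=\Span(\mathcal{G}_{n-1}\otimes I)$ consists of tensors $g\otimes I$, the identity $(g\otimes I)^k=g^k\otimes I$ implies $\exp(g\otimes I)=\exp(g)\otimes I$, so each $K_{i,j}$ is of the form $A\otimes I$ for some $A\in\SU(2^{n-1})$. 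Regrouping the adjacent $\mathfrak{k}_{n,0}$-factors then yields matrices $K^{(0)},K^{(1)},K^{(2)},K^{(3)}\in\SU(2^{n-1})$ with
$$\widehat{K_0K_1}=(K^{(0)}\otimes I)\,e^{f^{(0)}}\,(K^{(1)}\otimes I),\qquad \widehat{K_1^*}=(K^{(2)}\otimes I)\,e^{f^{(1)}}\,(K^{(3)}\otimes I).$$

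To conclude, I would substitute these expressions back into the first-level factorisation and set $\widetilde{K}^{(0)}:=e^{i\alpha Z/2}$ and $\widetilde{K}^{(1)}:=e^{i\beta Z/2}\in\SU(2)$ (unit determinant being immediate). This assembles the claimed form of $G$ term by term. The main obstacle, to my mind, is checking that the second application of Theorem~\ref{theo: KHK dcp} is genuinely well-posed: one must confirm that $\exp(\widehat{\mathfrak{k}}_n)$ is itself a compact, connected, semisimple Lie group and that $\Theta_X$ restricts to an involutive automorphism of it with fixed-point subgroup $\exp(\mathfrak{k}_{n,0})$. These properties can be traced back to the identification $\widehat{\mathfrak{k}}_n\cong\su(2^{n-1})\oplus\su(2^{n-1})$, under which $\theta_X$ acts as the swap of the two summands — a standard symmetric pair of Type~II. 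Once this is in place, the remainder is bookkeeping, modulo the phase-correction technicality handled in Appendix~\ref{appendix: phase}.
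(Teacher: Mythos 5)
Your proposal is correct and follows essentially the same route as the paper: apply Theorem~\ref{theo: KHK dcp} first with $\theta_Z$, split off the central $\fu(1)$ factor of $\mathfrak{k}_n$, and then apply the theorem a second time with $\theta_X$ on $\widehat{\mathfrak{k}}_n=\mathfrak{k}_{n,0}\oplus\mathfrak{k}_{n,1}$, identifying $\exp(\mathfrak{k}_{n,0})=\SU(2^{n-1})\otimes I$. Your observation that $(\widehat{\mathfrak{k}}_n,\theta_X)$ is the Type~II symmetric pair $\su(2^{n-1})\oplus\su(2^{n-1})$ with the swap involution is exactly the structural fact the paper relies on (implicitly, via \cite{Drury_2008} and \cite{Nakajima2005ANA}), and you correctly flag the phase-correction technicality deferred to Appendix~\ref{appendix: phase}.
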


Observe that this decomposition can be applied recursively to the $K^{(i)}$ factors, down to elements of $\SU(4)$ and Abelian factors.

%\newpage
%%%%%%%%
\section{Breakdown of the algorithm: step-by-step summary}
\label{sec: Breakdown of algorithm}

Our algorithm recursively implements the Khaneja-Glaser decomposition, factoring a large special unitary matrix to the level of single qubit unitaries and Abelian factors. In this section, we will describe its implementation scheme. A complete example of its application is illustrated in Appendix~\ref{app: example}.

    Given $G\in \SU(2^n)$, we implement its decomposition by following these steps:
\begin{enumerate}
    \item Define the involutive automorphism \label{item: comp. basis}
    \begin{equation*}
        \Theta_Z(G)
        :=(I^{\otimes n-1}\otimes Z) G (I^{\otimes n-1}\otimes Z).
    \end{equation*}

    \item Generate the Khaneja-Glaser basis for $\SU(2^n)$. 
            
    \item Compute the subspace element
    \begin{equation*}
        m_0=\frac{1}{2}\log(\Theta_Z(G^\dagger)G)\in\mathfrak{m}_n.
    \end{equation*}\label{item: m calc}
          
    \item Compute \label{item: log m}
    $K_{0,0}=G\exp(-m_0)$.
            
    \item Alphabetically order the basis elements of $\mathfrak{h}_n$ and define \label{item: v def}
    $$v=\sum_{u_i\in\mathcal{H}_n}\pi^{i-1}u_i,$$ 
    which generates a dense 1-parameter subgroup in  $\mathbf{H}$ \cite{SaEarp2005}.
           
    \item From $v$ and $m_0$, define the function 
    $$f_{v,m_0}(K)=\langle v, \Ad_{K}(m_0)\rangle,$$
    where $\langle a,b \rangle=\tr(\text{ad}_a,\text{ad}_b)$ is the Killing form on $\su(2^n)$.
                      
    \item  Find $K_{0,1}$, by minimizing \label{item: opt 2}
    $$K_{0,1} =\min_{k\in \mathfrak{k}_n} f_{v,m_0}(\exp(k)),$$
    over $\mathcal{K}_n$ \cite{SaEarp2005}.

    \item Compute the Abelian factor\label{item: h}
    $$h^{(0)}=K_{0,1}^\dagger m_0K_{0,1}.$$
    
    \item Assemble the KHK decomposition of $G$\label{item: decomp 1}:
    \begin{equation*}
        G= K_{0,0}K_{0,1}\exp(h^{(0)})K_{0,1}^{\dagger}.
    \end{equation*}

    \item Define the involutive automorphism  
    $$\Theta_X(G):=(I^{\otimes n-1}\otimes X) G (I^{\otimes n-1}\otimes X).$$

    \item \label{compute m_1 m_2}Compute the subspace elements $m_1,m_2\in \mathfrak{k}_{n,1} \oplus \Span (\frac{i}{2}I^{\otimes(n-1)}\otimes Z)$:
    \begin{align*}
        m_1
        &=\frac{1}{2} \log(\Theta_X(K_{0,1}^{\dagger}K_{0,0}^{\dagger})K_{0,0}K_{0,1}) \in\mathfrak{k}_{n,1} \oplus \Span (\frac{i}{2}I^{\otimes(n-1)}\otimes Z),\\
        m_2&=\frac{1}{2} \log(\Theta_X(K_{0,1})K_{0,1}^{\dagger})\in\mathfrak{k}_{n,1}\oplus \Span (\frac{i}{2}I^{\otimes(n-1)}\otimes Z).
    \end{align*}\label{item: m_2 calc}
    
    \item Compute 
    $$K_{1,0} =K_{0,0}K_{0,1}\exp(-m_1) \qandq K_{2,0}=K_{0,1}^\dagger\exp(-m_2);$$ \label{item: K_1 calc}

    \item Separate the $I^{\otimes(n-1)}\otimes Z$ phase from the factors
    \begin{equation*}
        \widehat{m}_1 =\mathrm{proj}_{\mathfrak{k}_{n,1}}(m_1), \hspace{0.5cm}\hspace{0.5cm} \widehat{m}_2=\mathrm{proj}_{\mathfrak{k}_{n,1}}(m_2),
        \end{equation*}
        \begin{equation*}
            \widetilde{m}_1=m_1-\widehat{m}_1 , \hspace{0.5cm}\hspace{0.5cm} \widetilde{m}_2=m_2-\widehat{m}_2;
        \end{equation*} \label{item: z phase}

    \item \label{item: seccond decomp} Repeat steps \ref{item: v def}-\ref{item: h} for $\widehat{m}_1$ and $\widehat{m}_2$, replacing $\mathcal{K}_n\rightarrow\mathcal{K}_{n,0}$, and $\mathfrak{h}_n\rightarrow\mathfrak{f}_{n}$, to get 
    \begin{equation*}
        \widehat{m}_1 = K_{1,1}f^{(0)}K_{1,1}^\dagger
        \qandq
        \widehat{m}_2 =K_{2,1}f^{(1)}K_{2,1}^\dagger;
    \end{equation*}

    \item Assemble the KHK decomposition of $K_{0,0}K_{0,1}$ and $K_{0,1}^\dagger$:
    \begin{align*}
        K_{0,0}K_{0,1} 
        &= \exp(\widetilde{m}_1) K_{1,0} K_{1,1} \exp(f^{(0)})K_{1,1}^{\dagger},\\
        K_{0,1}^\dagger
        &=\exp(\widetilde{m}_2) K_{2,0} K_{2,1} \exp(f^{(1)}) K_{2,1}^{\dagger},
    \end{align*}
    where $f^{(0)},f^{(1)}\in \mathfrak{f}_n$,  $\widetilde{m}_1, \widetilde{m}_2\in I^{\otimes(n-1)}\otimes \su(2) $ and $K_{i,j}\in \SU(2^{n-1})\otimes I$.

    \item Construct the elements of $ SU(2^{n-1})$ (see Appendix~\ref{appendix: phase}):
        $$\varphi_1 =\frac{1}{2^{n-1}}\arg  (\det(K'_{1,0}K'_{1,1})),
        \quad 
        \varphi_2=\frac{1}{2^{n-1}}\arg (\det(K'_{2,0}K'_{2,1})),$$
        $$K^{(0)}=e^{-i\varphi_1}K'_{1,0}K'_{1,1}\quad K^{(1)}=(K'_{1,1})^{\dagger},$$
        $$K^{(2)}=e^{-i\varphi_2}K'_{2,0}K'_{2,1}\quad K^{(3)}=(K'_{2,1})^{\dagger},$$
    where each $K'_{i,j}$ is obtained by removing the even rows and columns of $K_{i,j}$.\label{item: su(n-1)}

    \item Construct elements in $\SU(2)$ from $\widetilde{m}_1$ and $\widetilde{m}_2$:
    \[ \widetilde{K}^{(0)} =\exp(\widetilde{m}'_1)
    \qandq \widetilde{K}^{(1)}
    =\exp(\widetilde{m}'_2),\]
    where each $\widetilde{m}'_i$ is obtained by removing all but the first two columns and rows of $\widetilde{m}'_2$.

    \item Assemble the Khaneja-Glaser decomposition of $G$   \label{item: decomp 2}
    \begin{equation*}
        G =e^{i\varphi}K^{(0)}\otimes I \cdot e^{f^{(0)}} \cdot K^{(1)}\otimes I \cdot I^{\otimes (n-1)} \otimes\widetilde{K}^{(0)} \cdot  e^{h^{(0)}} \cdot K^{(2)}\otimes I\cdot  e^{f^{(1)}} \cdot K^{(3)}\otimes I \cdot I^{\otimes (n-1)} \otimes\widetilde{K}^{(1)}, 
    \end{equation*}
    where $\varphi=\varphi_1+\varphi_2$, $K^{(i)}\in \SU(2^{n-1})$, $\widetilde{K}^{(i)}\in \SU(2)$, $h^{(0)}\in \mathfrak{h}_n$ and $f^{(i)}\in \mathfrak{f}_n$.
                
    \item Repeat steps \ref{item: comp. basis}-\ref{item: decomp 2}, iterating $n\rightarrow n-1\geq 3$ and substituting $G$ with $K^{(i)}$, aggregating the $\varphi_i$ phases and relabeling the superscripts after each decomposition; \label{item: reccursive decomp}

%\item After decomposing up $K^{(i)}\in \SU(4)$, implement the decomposition described in Ref~\cite{PhysRevA.63.062309} to obtain
%    $$K^{(i)}=U_A\otimes U_B U_d V_A\otimes V_B$$
%    where $U_A,U_B,V_A,V_B\in \SU(2)$ and $U_d=exp(-i\sigma_A^Td\sigma_B)$;
    \end{enumerate}

\begin{remark}
In finite precision, Step~\ref{item: v def} uses a truncated vector $\widehat v=v+\delta v$. Since $f_{v,m_0}(K)=\langle v,\Ad_K(m_0)\rangle$ is linear in $v$, the objective perturbation satisfies
\begin{equation}
    \bigl|f_{\widehat v,m_0}(K)-f_{v,m_0}(K)\bigr|
=\bigl|\langle \delta v,\Ad_K(m_0)\rangle\bigr|
\le \|\delta v\|\,\|m_0\|\qquad(\forall K),
\end{equation}
so the minimization in Step~\ref{item: opt 2} is affected only at first order in $\|\delta v\|$.

Moreover, writing the computed minimizer as $\widehat K_{0,1}=K_{0,1}+\Delta K$, the induced perturbation on Step~\ref{item: h} obeys the exact expansion
\begin{equation}
   \widehat h^{(0)}-\;h^{(0)}
=(K_{0,1}+\Delta K)^\dagger m_0 (K_{0,1}+\Delta K)-K_{0,1}^\dagger m_0 K_{0,1}
=K_{0,1}^\dagger m_0\Delta K+\Delta K^\dagger m_0 K_{0,1}+\Delta K^\dagger m_0\Delta K, 
\end{equation}
hence
\begin{equation}
   \|\widehat h^{(0)}-h^{(0)}\|\;\le\;2\|m_0\|\,\|\Delta K\|+\|m_0\|\,\|\Delta K\|^2
=\;O(\varepsilon_{\mathrm{mach}})\,\|m_0\|. 
\end{equation}
Therefore, even under the pessimistic assumption that the entire perturbation lies in $h_n^\perp$, its norm remains at the machine-precision scale.
\end{remark}

%\newpage
\section{Implementation analysis and benchmarking}
\label{sec: conclusion}

We implemented the algorithm in Python, in a jupyter notebook, and the code is available at \cite{github}.
To implement step \ref{item: opt 2}, we used the optimization package \enquote{scipy.optimize}, invoking the functions \enquote{root} and \enquote{minimize} respectively. Whenever the algorithm completes an iteration of steps \ref{item: comp. basis}-\ref{item: decomp 2}, it stores the factors in a list through a custom Python class that preserves the product order. Finally, it labels every factor with a superscript.

\subsection{Error management}
Analyzing the precision of our decomposition, requires considering two different types of error. The \emph{approximation error} measures how well the factors approximate the original matrix, in the matrix norm:
\begin{equation*}
    E_{a}(G):=\|G-\Bar{G}\|.
\end{equation*}

The \emph{subspace error} is associated to factors not being quite in the correct subspace. For $h\in \mathfrak{h}_j$, we define
\begin{equation}
    E_{s}(h)=\frac{1}{m}\|([h,h_1],...,[h,h_m])_{h_i \in \mathcal{H}_j}\|,
\end{equation}
and analogously for $f\in \mathfrak{f}_j$. Note that, by construction, only the computations of Abelian factors incur in subspace errors. 
\begin{remark}
    Another drawback to the approach in \cite{SaEarp2005} is that the it produces subspace errors in the $K$ elements, which will propagate in the subsequent decompositions, since the root of the polynomial in \eqref{eq: BCH polynomial}, resulting from truncations of the BCH expansion (see Appendix~\ref{app: BCH}), is just an approximation of the matrix $m$ that satisfies \eqref{eq: BCH basis}, hence may not actually belong to the subspace $\mathfrak{m}_n$. We can easily solve this problem by taking the projection onto $\mathfrak{m}_n$, however doing so implies, in some cases, increasing the approximation error. 
    Our implementation circumvents this problem by only incurring in subspace errors in the $F$ and $H$ elements.
\end{remark}

We tested our code by generating $10000$ random matrices from $\SU(8)$ and $500$ from $\SU(16)$ using the \enquote{scipy.stats.unitary\_group} function, and applying the decomposition on a Google Collab notebook. We display the error associated with these decompositions in Table~\ref{tab: results}. 

\begin{table}[h]
\centering
\begin{tabular}{l|lllll}
       & Time (s)  & Mean $E_{a}$ &  Mean $E_{s}$ & $\sigma(E_{s})$ \\ \hline
SU(8)  & $0.9$       & $2.2\cdot10^{-14}$                            & $2.3\cdot10^{-6}$                                 & $1.7\cdot10^{-6}     $                                                     \\
SU(16) & $256.7$       & $1.2\cdot10^{-13}$                             & $5.7\cdot10^{-5}$                                 & $4.6\cdot10^{-4}  $                                               
\end{tabular}
\caption{Benchmarks for the algorithm for $\SU(8)$ and $\SU(16)$}
\label{tab: results}
\end{table}

%%%%%%%%%%
%%%%%%%%

\subsection{Comparison with existing approaches}
\label{app: Analysis}

There are many papers involving the Cartan decomposition of $\SU(2^n)$. In fact, this work references several results in this direction. Among them, \cite{SaEarp2005} presents an explicit algorithm for this decomposition using the Khaneja-Glaser basis, which serves as a foundation for our work. 

When using the Cartan decomposition to factor unitary matrices, one of the main challenges one must overcome is how to deal with the transition between the group and its algebra — which is mediated by the (a priori non-injective) exponential map. Although \cite{SaEarp2005} appears to resolve this problem using mathematically rigorous methods, their approach exhibits two critical limitations:

\begin{itemize}
    \item [(i)] To determine $K_0$ and $m$ satisfying the equation $G = K_0\exp(m)$, \cite{SaEarp2005} employs the Baker – Campbell – Hausdorff (BCH) expansion (see Appendix~\ref{app: BCH}). We encountered several challenges in implementing this approach. The main issue is that the series is only guaranteed to converge for certain elements of the algebra \cite{Stefano}. Consequently, there is no guarantee that the polynomial root of \eqref{eq: BCH polynomial} correctly corresponds to the target matrix $m$.
    
    \item [(ii)] The algorithm proposed in \cite{SaEarp2005} also does not explicitly describe how to compute the $\log$ of arbitrary elements of $\SU(2^n)$. This poses a problem because, in general, the logarithm of a special unitary matrix is not unique, introducing significant errors in the KHK decomposition.
\end{itemize}

In this work, we overcome these issues by leveraging involutive automorphisms to minimize the need for logarithm computations, and we observe that when such computations are necessary, they can be performed in a controlled manner, either directly or through a brute force optimization, always ensuring the result remains within the desired subspace. We also provide a Python implementation of the algorithm, which was not available for \cite{SaEarp2005}. Furthermore, while the use of involutions was inspired by \cite{Nakajima2005ANA,Drury_2008}, our approach improves upon these algorithms: after decomposing an $n$-qubit unitary, we can recursively decompose the resulting $(n-1)$-qubit unitaries. We also use a different maximal Abelian subalgebra, the one generated by Khaneja-Glaser basis. As a result, the Cartan factors produced by our decomposition have a near-optimal implementation into C-NOTs and single qubit rotations, which was recently proposed \cite{Mansky_2023}. 

In conclusion, although our work builds upon the algorithm developed in \cite{SaEarp2005} and incorporates techniques from \cite{Nakajima2005ANA} and \cite{Drury_2008}, neither of these perspectives alone is sufficient to obtain a decomposition for which every resulting factor has a near-optimal computational implementation.

\subsection{Conclusion and future developments}

%%%%%%%%%%%
In this study, we have successfully revisited and improved the algorithm from \cite{SaEarp2005} for the Cartan-Khaneja-Glaser decomposition of $\SU(2^n)$. By leveraging the structure of symmetric Lie algebras and their associated involutive automorphisms, we developed an algebraic decomposition procedure that bypasses several limitations of the original method—most notably, the reliance on truncated Baker–Campbell–Hausdorff series and the repeated use of matrix logarithms, which are often ill-defined or numerically unstable in practice. 
The method naturally leads to a recursive decomposition of multi-qubit unitaries down to abelian components and $\SU(2)$ evolutions, for which efficient implementations are well-known. We also show that errors introduced during decomposition are confined to Abelian factors, ensuring that they do not propagate throughout the circuit—a critical feature for applications in fault-tolerant quantum computing. We implemented our algorithm in Python, making it available in an open-source repository, and validated its effectiveness using thousands of randomly generated matrices in $\SU(8)$ and $\SU(16)$.

Nevertheless, our algorithm has room for improvement, particularly in its scalability with the increasing number of qubits. One immediate improvement would be the parallelization of the decomposition in steps \ref{item: seccond decomp} and \ref{item: reccursive decomp}. The most demanding computational task, step \ref{item: opt 2}, could also be improved if optimized for GPU processing, or by introducing an efficient method of gradient calculation.
Another significant enhancement to our algorithm would involve bypassing the optimization process outlined in step \ref{item: opt 2}, aiming for the direct computation of variables $h$ and $K_1$. Implementing this could substantially improve scalability, broadening its utility to more complex quantum systems. In that regard, further research should consider whether the approach presented in \cite{Nakajima2005ANA} could be effectively adapted to our decomposition, with the Khaneja-Glaser basis.
Finally, our code could be appended with a method to decompose the $\SU(4)$ unitaries into single qubit $\SU(2)$ elements, such as the ones proposed in \cite{PhysRevA.69.032315, QiskitTwoQubitBasisDecomposer}. Finally, the approach  proposed by \cite{vatan2004realization}, and subsequently expanded by \cite{Mansky_2023}, could be implemented to decompose the Abelian factors into SWAP and C-NOT gates, which are beneficial for numerous applications.

%\newpage
\section*{Acknoledgments}
JMR was partially supported by the Coordenação de Aperfeiçoamento de Pessoal de Nível Superior (CAPES)- Finance Code 001 
%and QuIIN - Quantum Industrial Innovation, EMBRAPII CIMATEC Competence Center in Quantum Technologies, with financial resources from the PPI IoT/Manufatura 4.0 of the MCTI grant number \mbox{[053/2023]}, signed with EMBRAPII. 
AD was partially supported by the Coordenação de Aperfeiçoamento de Pessoal de Nível Superior (CAPES)- Finance Code 001 and the São Paulo Research Foundation (Fapesp) \mbox{[2024/23590-2]}. HSE was supported by the São Paulo Research Foundation (Fapesp)    \mbox{[2021/04065-6]} \emph{BRIDGES collaboration} and the Brazilian National Council for Scientific and Technological Development (CNPq)  \mbox{[311128/2020-3]}. MTC was partially supported by CNPq grant \mbox{[311314/2023-6]} and the São Paulo Research Foundation (Fapesp) \mbox{}{[2024/16657-3]}.

The method described in this paper was submited for patent protection.

\newpage
\printbibliography

@article{SaEarp2005,
   title={A constructive algorithm for the {C}artan decomposition of {$SU(2^N)$}},
   volume={46},
   ISSN={1089-7658},
   url={http://dx.doi.org/10.1063/1.2008210},
   DOI={10.1063/1.2008210},
   number={8},
   journal={Journal of Mathematical Physics},
   publisher={AIP Publishing},
   author={Sá Earp, Henrique N. and Pachos, Jiannis K.},
   year={2005},
   month=8 }

@article{Stefano,
author = {Stefano Biagi, Andrea Bonfiglioli and Marco Matone},
title = {On the {B}aker-{C}ampbell-{H}ausdorff Theorem: non-convergence and prolongation issues},
journal = {Linear and Multilinear Algebra},
volume = {68},
number = {7},
pages = {1310-1328},
year = {2020},
publisher = {Taylor & Francis},
doi = {10.1080/03081087.2018.1540534},

}

@article{khaneja2000,
  author = {Navin Khaneja and Steffen Glaser},
  title = {Cartan Decomposition of $SU(2^n)$, Constructive Controllability of Spin Systems and Universal Quantum Computing}, 
  year = {2000},
  journal = {arXiv preprint quant-ph},
  eprint = {https://doi.org/10.48550/arXiv.quant-ph/0010100},
}

@Book{Helgason,
  Title                    = {Differential Geometry, {L}ie Groups, and Symmetric Spaces},
  Author                   = {Sigurdur Helgason},
  Publisher                = {Academic Press, Inc.},
  Year                     = {1978},

  Address                  = {San Diego, CA, USA},
  Series                   = {
Pure and Applied Mathematics},
  Edition ={},

  Bdsk-url-1               = {},
  Doi                      = {},
  ISBN                     = {0-12-338460-5},
  Pages                    = {X, 634}
}

@article{Mansky_2023,
   title={Near-optimal quantum circuit construction via {C}artan decomposition},
   volume={108},
   ISSN={2469-9934},
   url={http://dx.doi.org/10.1103/PhysRevA.108.052607},
   DOI={10.1103/physreva.108.052607},
   number={5},
   journal={Physical Review A},
   publisher={American Physical Society (APS)},
   author={Mansky, Maximilian Balthasar and Castillo, Santiago Londoño and Puigvert, Victor Ramos and Linnhoff-Popien, Claudia},
   year={2023},
   month=11 }

@article{vatan2004realization,
  title = {Realization of a General Three-Qubit Quantum Gate}, 
  author = {Farrokh Vatan and Colin P. Williams},
  year = {2004},
  journal = {arXiv preprint quant-ph},
  eprint = {https://doi.org/10.48550/arXiv.quant-ph/0401178},
}

@article{PhysRevLett.93.130502,
  title = {Quantum Circuits for General Multiqubit Gates},
  author = {M\"ott\"onen, Mikko and Vartiainen, Juha J. and Bergholm, Ville and Salomaa, Martti M.},
  journal = {Phys. Rev. Lett.},
  volume = {93},
  issue = {13},
  pages = {130502},
  numpages = {4},
  year = {2004},
  month = {9},
  publisher = {American Physical Society},
  doi = {10.1103/PhysRevLett.93.130502},
  url = {https://link.aps.org/doi/10.1103/PhysRevLett.93.130502}
}

@article{PhysRevLett.92.177902,
  title = {Efficient Decomposition of Quantum Gates},
  author = {Vartiainen, Juha J. and M\"ott\"onen, Mikko and Salomaa, Martti M.},
  journal = {Phys. Rev. Lett.},
  volume = {92},
  issue = {17},
  pages = {177902},
  numpages = {4},
  year = {2004},
  month = {4},
  publisher = {American Physical Society},
  doi = {10.1103/PhysRevLett.92.177902},
  url = {https://link.aps.org/doi/10.1103/PhysRevLett.92.177902}
}

@article{knill1995approximation,
  title = {Approximation by Quantum Circuits}, 
  author = {E. Knill},
  year = {1995},
  journal = {arXiv preprint quant-ph},
  eprint = {https://doi.org/10.48550/arXiv.quant-ph/9508006},
}

@article{10.1063/1.528242,
    author = {Bose, Asok},
    title = "{Dynkin’s method of computing the terms of the {B}aker–{C}ampbell–{H}ausdorff series}",
    journal = {Journal of Mathematical Physics},
    volume = {30},
    number = {9},
    pages = {2035-2037},
    year = {1989},
    doi = {10.1063/1.528242},
    url = {https://doi.org/10.1063/1.528242},
%    eprint = {https://pubs.aip.org/aip/jmp/article-pdf/30/9/2035/8159040/2035\_1\_online.pdf}
}

@article{dynkin1949representation,
  title={On the representation by means of commutators of the series $\log(e^xe^y)$ for noncommutative x and y},
  author={Dynkin, Evgenii Borisovich},
  journal={Matematicheskii Sbornik},
  volume={67},
  number={1},
  pages={155--162},
  year={1949},
  publisher={Russian Academy of Sciences, Steklov Mathematical Institute of Russian~…}
}

@article{Liu_2021,
doi = {10.1209/0295-5075/ac483b},
url = {https://dx.doi.org/10.1209/0295-5075/ac483b},
year = {2022},
month = {3},
publisher = {EDP Sciences, IOP Publishing and Società Italiana di Fisica},
volume = {136},
number = {6},
pages = {60001},
author = {Wen-Qiang Liu and Xin-Jie Zhou and Hai-Rui Wei},
title = {Collective unitary evolution with linear optics by {C}artan decomposition},
journal = {Europhysics Letters},
abstract = {Unitary operation is an essential step for quantum information processing. We first propose an iterative procedure for decomposing a general unitary operation without resorting to controlled-NOT gate and single-qubit rotation library. Based on the results of decomposition, we design two compact architectures to deterministically implement arbitrary two-qubit polarization-spatial and spatial-polarization collective unitary operations, respectively. The involved linear optical elements are reduced from 25 to 20 and 21 to 20, respectively. Moreover, the parameterized quantum computation can be flexibly manipulated by wave plates and phase shifters. As an application, we construct the specific quantum circuits to realize two-dimensional quantum walk and quantum Fourier transformation. Our schemes are simple and feasible with the current technology.}
}

@article{PhysRevLett.129.070501,
  title = {Fixed Depth {H}amiltonian Simulation via {C}artan Decomposition},
  author = {K\"okc\"u, Efekan and Steckmann, Thomas and Wang, Yan and Freericks, J. K. and Dumitrescu, Eugene F. and Kemper, Alexander F.},
  journal = {Phys. Rev. Lett.},
  volume = {129},
  issue = {7},
  pages = {070501},
  numpages = {7},
  year = {2022},
  month = {8},
  publisher = {American Physical Society},
  doi = {10.1103/PhysRevLett.129.070501},
  url = {https://link.aps.org/doi/10.1103/PhysRevLett.129.070501}
}

@article{Nakajima2005ANA,
  title={A new algorithm for producing quantum circuits using {KAK} decompositions},
  author={Yumi Nakajima and Yasuhito Kawano and Hiroshi Sekigawa},
  journal={Quantum Inf. Comput.},
  year={2005},
  volume={6},
  pages={67-80},
  url={https://api.semanticscholar.org/CorpusID:1337965}
}

@article{DiVincenzo_1995,
   title={Two-bit gates are universal for quantum computation},
   volume={51},
   ISSN={1094-1622},
   url={http://dx.doi.org/10.1103/PhysRevA.51.1015},
   DOI={10.1103/physreva.51.1015},
   number={2},
   journal={Physical Review A},
   publisher={American Physical Society (APS)},
   author={DiVincenzo, David P.},
   year={1995},
   month=2, 
    pages={1015–1022} }

@misc{EkertNotes,
  title        = {Introduction to Quantum Information Science},
  author       = {Ekert, Artur and Hosgood, Timothy and Kay, Alastair  and Macchiavello, Chiara },
  year         = {2024},
  url          = {https://qubit.guide/}, 
  note         = {Accessed: 2024-07-22} 
}

@article{Sycamore,
  author       = {Arute, Frank and Arya, Kunal and Babbush, Ryan and Bacon, Dave and Bardin, Joseph C. and Barends, Rami and Biswas, Rupak and Boixo, Sergio and Brandao, Fernando G. S. L. and Buell, David A. and Burkett, Brian and Chen, Yu and Chen, Zijun and Chiaro, Ben and Collins, Roberto and Courtney, William and Dunsworth, Andrew and Farhi, Edward and Foxen, Brooks and Fowler, Austin and Gidney, Craig and Giustina, Marissa and Graff, Rob and Guerin, Keith and Habegger, Steve and Harrigan, Matthew P. and Hartmann, Michael J. and Ho, Alan and Hoffmann, Markus and Huang, Trent and Humble, Travis S. and Isakov, Sergei V. and Jeffrey, Evan and Jiang, Zhang and Kafri, Dvir and Kechedzhi, Kostyantyn and Kelly, Julian and Klimov, Paul V. and Knysh, Sergey and Korotkov, Alexander and Kostritsa, Fedor and Landhuis, David and Lindmark, Mike and Lucero, Erik and Lyakh, Dmitry and Mandr\`a, Salvatore and McClean, Jarrod R. and McEwen, Matthew and Megrant, Anthony and Mi, Xiao and Michielsen, Kristel and Mohseni, Masoud and Mutus, Josh and Naaman, Ofer and Neeley, Matthew and Neill, Charles and Niu, Murphy Yuezhen and Ostby, Eric and Petukhov, Andre and Platt, John C. and Quintana, Chris and Rieffel, Eleanor G. and Roushan, Pedram and Rubin, Nicholas C. and Sank, Daniel and Satzinger, Kevin J. and Smelyanskiy, Vadim and Sung, Kevin J. and Trevithick, Matthew D. and Vainsencher, Amit and Villalonga, Benjamin and White, Theodore and Yao, Z. Jamie and Yeh, Ping and Zalcman, Adam and Neven, Hartmut and Martinis, John M.},
  title        = {Quantum supremacy using a programmable superconducting processor},
  journal      = {Nature},
  year         = {2019},
  volume       = {574},
  number       = {7779},
  pages        = {505--510},
  month        = {10},
  doi          = {10.1038/s41586-019-1666-5},
  url          = {https://doi.org/10.1038/s41586-019-1666-5},
  abstract     = {The promise of quantum computers is that certain computational tasks might be executed exponentially faster on a quantum processor than on a classical processor1. A fundamental challenge is to build a high-fidelity processor capable of running quantum algorithms in an exponentially large computational space. Here we report the use of a processor with programmable superconducting qubits2–7 to create quantum states on 53 qubits, corresponding to a computational state-space of dimension 253 (about 1016). Measurements from repeated experiments sample the resulting probability distribution, which we verify using classical simulations. Our Sycamore processor takes about 200 seconds to sample one instance of a quantum circuit a million times—our benchmarks currently indicate that the equivalent task for a state-of-the-art classical supercomputer would take approximately 10,000 years. This dramatic increase in speed compared to all known classical algorithms is an experimental realization of quantum supremacy8–14 for this specific computational task, heralding a much-anticipated computing paradigm.},
}

@misc{github,
  author       = {Mora Rodríguez, John A. and Dutra, Arthur C. R. and Sá Earp, Henrique N. and {Terra Cunha}, Marcelo},
  title        = {{KG}-decomposition github repository},
  year         = {2024},
  note         = {\url{https://github.com/turdutra/KG-decomposition}}
}

@article{Drury_2008,
doi = {10.1088/1751-8113/41/39/395305},
url = {https://dx.doi.org/10.1088/1751-8113/41/39/395305},
year = {2008},
month = {9},
publisher = {},
volume = {41},
number = {39},
pages = {395305},
author = {Byron Drury and Peter Love},
title = {Constructive quantum Shannon decomposition from Cartan involutions},
journal = {Journal of Physics A: Mathematical and Theoretical},
abstract = {The work presented here extends upon the best known universal quantum circuit, the quantum Shannon decomposition proposed by Shende et al (2006 IEEE Trans. Comput.-Aided Des. Integr. Circuits Syst. 25 1000). We obtain the basis of the circuit's design in a pair of Cartan decompositions. This insight gives a simple constructive factoring algorithm in terms of the Cartan involutions corresponding to these decompositions.}
}

@misc{QiskitTwoQubitBasisDecomposer,
  title        = {Qiskit {TwoQubitBasisDecomposer} Function},
  author       = {{Qiskit Development Team}},
  howpublished = {\url{https://docs.quantum.ibm.com/api/qiskit/qiskit.synthesis.TwoQubitBasisDecomposer}},
  year         = {2023},
  note         = {Accessed: 2025-03-26}
}

@article{PhysRevA.69.032315,
  title = {Optimal quantum circuits for general two-qubit gates},
  author = {Vatan, Farrokh and Williams, Colin},
  journal = {Phys. Rev. A},
  volume = {69},
  issue = {3},
  pages = {032315},
  numpages = {5},
  year = {2004},
  month = {3},
  publisher = {American Physical Society},
  doi = {10.1103/PhysRevA.69.032315},
  url = {https://link.aps.org/doi/10.1103/PhysRevA.69.032315}
}

@book{SanMartin2010,
  author    = {Luiz Antonio Barrera San Martin},
  title     = {Álgebras de Lie},
  edition   = {2},
  year      = {2010},
  publisher = {Editora da UNICAMP},
  address   = {Campinas, SP},
  isbn      = {9788526808768},
  language  = {Portuguese}
}

@book{Knapp1996,
  author    = {Anthony W. Knapp},
  title     = {Lie Groups Beyond an Introduction},
  series    = {Progress in Mathematics},
  volume    = {140},
  publisher = {Birkhäuser},
  address   = {Boston, MA},
  year      = {1996},
  doi       = {10.1007/978-1-4757-2453-0},
  isbn      = {978-1-4757-2453-0},
}

@book{Gilmore_2008, place={Cambridge}, title={Lie Groups, Physics, and Geometry: An Introduction for Physicists, Engineers and Chemists}, publisher={Cambridge University Press}, author={Gilmore, Robert}, year={2008}}

@book{nilsen_chuang,
  title={Quantum computation and quantum information},
  author={Nielsen, Michael A and Chuang, Isaac L},
  year={2010},
  publisher={Cambridge university press}
}

\appendix

\newpage

\section{Detailed example}
\label{app: example}
We will now demonstrate the decomposition with the $\SU(8)$ matrix:
\[\hspace*{-1cm} 
G=\begin{pmatrix}
- \frac{84}{229} + \frac{131}{844}i & - \frac{249}{962} - \frac{12}{437}i & \frac{287}{928} - \frac{2}{751}i & - \frac{110}{651} - \frac{9}{770}i & - \frac{179}{552} + \frac{38}{173}i & - \frac{110}{651} - \frac{9}{770}i & - \frac{310}{771} + \frac{59}{121}i & \frac{249}{962} + \frac{12}{437}i \\
\frac{175}{883} + \frac{131}{638}i & \frac{185}{774} + \frac{28}{313}i & \frac{207}{953} + \frac{48}{191}i & \frac{39}{139} + \frac{67}{492}i & \frac{207}{953} + \frac{48}{191}i & \frac{18}{427} - \frac{357}{860}i & - \frac{175}{883} - \frac{131}{638}i & \frac{466}{927} - \frac{154}{933}i \\
\frac{47}{578} + \frac{109}{877}i & \frac{124}{993} + \frac{8}{475}i & - \frac{49}{430} + \frac{427}{758}i & - \frac{96}{179} & \frac{65}{571} + \frac{118}{941}i & - \frac{96}{179} & \frac{7}{68} - \frac{22}{609}i & - \frac{124}{993} - \frac{8}{475}i \\
- \frac{1}{38} + \frac{182}{989}i & \frac{67}{263} - \frac{112}{421}i & - \frac{69}{374} - \frac{251}{741}i & - \frac{383}{921} + \frac{143}{1000}i & - \frac{69}{374} - \frac{251}{741}i & - \frac{17}{888} + \frac{25}{266}i & \frac{1}{38} - \frac{182}{989}i & \frac{55}{112} - \frac{229}{986}i \\
- \frac{1}{165} - \frac{106}{393}i & - \frac{377}{869} + \frac{5}{876}i & - \frac{157}{475} + \frac{55}{224}i & \frac{12}{343} + \frac{87}{992}i & \frac{230}{519} - \frac{201}{692}i & \frac{12}{343} + \frac{87}{992}i & \frac{1}{786} + \frac{73}{254}i & \frac{377}{869} - \frac{5}{876}i \\
\frac{147}{725} - \frac{5}{16}i & \frac{349}{771} + \frac{102}{919}i & - \frac{23}{459} + \frac{68}{369}i & - \frac{175}{729} - \frac{184}{863}i & - \frac{23}{459} + \frac{68}{369}i & \frac{307}{644} + \frac{141}{508}i & - \frac{147}{725} + \frac{5}{16}i & \frac{41}{798} - \frac{134}{897}i \\
- \frac{313}{785} + \frac{259}{536}i & \frac{239}{836} + \frac{19}{921}i & - \frac{81}{377} + \frac{65}{992}i & \frac{25}{203} + \frac{71}{608}i & \frac{282}{773} - \frac{223}{997}i & \frac{25}{203} + \frac{71}{608}i & - \frac{303}{814} + \frac{109}{939}i & - \frac{239}{836} - \frac{19}{921}i \\
- \frac{278}{993} + \frac{12}{83}i & - \frac{1}{152} - \frac{95}{199}i & \frac{1}{65} + \frac{159}{659}i & \frac{191}{681} - \frac{5}{12}i & \frac{1}{65} + \frac{159}{659}i & \frac{22}{749} + \frac{127}{320}i & \frac{278}{993} - \frac{12}{83}i & \frac{173}{966} - \frac{103}{874}i
\end{pmatrix}.\]

By implementing step \ref{item: m calc}, we are able to calculate

\begin{align*}
    m_0=&\begin{pmatrix}
0 & - \frac{1}{2}i & 0 & 0 & 0 & 0 & 0 & \frac{1}{2}i \\
- \frac{1}{2}i & 0 & 0 & 0 & 0 & 0 & \frac{1}{2}i & 0 \\
0 & 0 & 0 & \frac{1}{2}i & 0 & \frac{1}{2}i & 0 & 0 \\
0 & 0 & \frac{1}{2}i & 0 & \frac{1}{2}i & 0 & 0 & 0 \\
0 & 0 & 0 & \frac{1}{2}i & 0 & \frac{1}{2}i & 0 & 0 \\
0 & 0 & \frac{1}{2}i & 0 & \frac{1}{2}i & 0 & 0 & 0 \\
0 & \frac{1}{2}i & 0 & 0 & 0 & 0 & 0 & - \frac{1}{2}i \\
\frac{1}{2}i & 0 & 0 & 0 & 0 & 0 & - \frac{1}{2}i & 0 \\
\end{pmatrix}
=\frac{1}{2}i\sigma_x\sigma_x\sigma_x- \frac{1}{2}i\sigma_z\sigma_z\sigma_x,
\end{align*}
and step \ref{item: log m} allows us to then calculate
\[\hspace*{-1cm} K_{0,0}=\begin{pmatrix}
- \frac{349}{992} + \frac{5}{362}i & 0 & \frac{63}{208} + \frac{58}{647}i & 0 & - \frac{165}{499} + \frac{73}{234}i & 0 & - \frac{254}{609} + \frac{412}{655}i & 0 \\
0 & \frac{103}{812} + \frac{87}{440}i & 0 & \frac{407}{974} + \frac{11}{628}i & 0 & \frac{117}{652} - \frac{482}{903}i & 0 & \frac{554}{901} - \frac{249}{911}i \\
\frac{15}{208} + \frac{165}{857}i & 0 & - \frac{35}{307} + \frac{441}{515}i & 0 & \frac{71}{624} + \frac{223}{533}i & 0 & \frac{109}{972} - \frac{98}{939}i & 0 \\
0 & \frac{95}{616} - \frac{83}{296}i & 0 & - \frac{533}{887} + \frac{68}{279}i & 0 & - \frac{185}{906} + \frac{149}{765}i & 0 & \frac{578}{977} - \frac{212}{973}i \\
- \frac{9}{980} - \frac{490}{967}i & 0 & - \frac{13}{46} + \frac{12}{53}i & 0 & \frac{55}{112} - \frac{291}{940}i & 0 & \frac{4}{911} + \frac{376}{717}i & 0 \\
0 & \frac{619}{993} + \frac{53}{239}i & 0 & - \frac{63}{452} - \frac{181}{974}i & 0 & \frac{526}{911} + \frac{68}{223}i & 0 & - \frac{92}{771} - \frac{237}{911}i \\
- \frac{394}{961} + \frac{461}{721}i & 0 & - \frac{50}{331} - \frac{1}{570}i & 0 & \frac{3}{7} - \frac{254}{873}i & 0 & - \frac{209}{579} - \frac{16}{399}i & 0 \\
0 & - \frac{16}{187} - \frac{561}{890}i & 0 & \frac{47}{114} - \frac{275}{647}i & 0 & \frac{142}{881} + \frac{155}{399}i & 0 & \frac{8}{31} + \frac{35}{997}i \\
\end{pmatrix}.\]
We then apply step~\ref{item: opt 2} to calculate 
\[K_{0,1}=\begin{pmatrix}
1 & 0 & 0 & 0 & 0 & 0 & 0 & 0 \\
0 & 1 & 0 & 0 & 0 & 0 & 0 & 0 \\
0 & 0 & 1 & 0 & 0 & 0 & 0 & 0 \\
0 & 0 & 0 & 1 & 0 & 0 & 0 & 0 \\
0 & 0 & 0 & 0 & 1 & 0 & 0 & 0 \\
0 & 0 & 0 & 0 & 0 & 1 & 0 & 0 \\
0 & 0 & 0 & 0 & 0 & 0 & 1 & 0 \\
0 & 0 & 0 & 0 & 0 & 0 & 0 & 1 \\
\end{pmatrix}
\qandq 
h^{(0)}=m_0.\]

We can now decompose the $K_{0,0}$ matrix. Step~\ref{item: m_2 calc} allows us to calculate
\begin{align*}
\hspace*{-1cm} m_1&=\begin{pmatrix}
- \frac{789}{959}i & 0 & \frac{64}{309} - \frac{75}{659}i & 0 & \frac{87}{788} + \frac{191}{981}i & 0 & \frac{456}{943} + \frac{33}{59}i & 0 \\
0 & \frac{789}{959}i & 0 & - \frac{64}{309} + \frac{75}{659}i & 0 & - \frac{87}{788} - \frac{191}{981}i & 0 & - \frac{456}{943} - \frac{33}{59}i \\
- \frac{64}{309} - \frac{75}{659}i & 0 & - \frac{91}{254}i & 0 & \frac{40}{577} + \frac{207}{653}i & 0 & - \frac{87}{788} + \frac{359}{864}i & 0 \\
0 & \frac{64}{309} + \frac{75}{659}i & 0 & \frac{91}{254}i & 0 & - \frac{40}{577} - \frac{207}{653}i & 0 & \frac{87}{788} - \frac{359}{864}i \\
- \frac{87}{788} + \frac{191}{981}i & 0 & - \frac{40}{577} + \frac{207}{653}i & 0 & - \frac{9}{16}i & 0 & - \frac{64}{309} + \frac{364}{829}i & 0 \\
0 & \frac{87}{788} - \frac{191}{981}i & 0 & \frac{40}{577} - \frac{207}{653}i & 0 & \frac{9}{16}i & 0 & \frac{64}{309} - \frac{364}{829}i \\
- \frac{456}{943} + \frac{33}{59}i & 0 & \frac{87}{788} + \frac{359}{864}i & 0 & \frac{64}{309} + \frac{364}{829}i & 0 & \frac{251}{417}i & 0 \\
0 & \frac{456}{943} - \frac{33}{59}i & 0 & - \frac{87}{788} - \frac{359}{864}i & 0 & - \frac{64}{309} - \frac{364}{829}i & 0 & - \frac{251}{417}i \\
\end{pmatrix}\\
\hspace*{-1cm} &=- \frac{129}{452}iII\sigma_z+\frac{74}{455}iI\sigma_x\sigma_z- \frac{327}{803}iI\sigma_z\sigma_z+\frac{299}{980}i\sigma_xI\sigma_z+\frac{333}{760}i\sigma_x\sigma_x\sigma_z+\frac{64}{309}i\sigma_x\sigma_y\sigma_z- \frac{87}{788}i\sigma_x\sigma_z\sigma_z+\frac{196}{709}i\sigma_y\sigma_x\sigma_z\\
\hspace*{-1cm} &- \frac{67}{553}i\sigma_y\sigma_y\sigma_z+\frac{87}{788}i\sigma_y\sigma_z\sigma_z- \frac{299}{980}i\sigma_zI\sigma_z- \frac{196}{709}i\sigma_z\sigma_x\sigma_z+\frac{64}{309}i\sigma_z\sigma_y\sigma_z+\frac{7}{40}i\sigma_z\sigma_z\sigma_z,
\end{align*}
and step~\ref{item: K_1 calc} shows us that
\[\hspace*{-1cm} K_1=\begin{pmatrix}
- \frac{13}{313} + \frac{296}{869}i & 0 & \frac{487}{823} + \frac{120}{619}i & 0 & - \frac{13}{765} - \frac{67}{930}i & 0 & \frac{195}{523} + \frac{113}{191}i & 0 \\
0 & - \frac{13}{313} + \frac{296}{869}i & 0 & \frac{487}{823} + \frac{120}{619}i & 0 & - \frac{13}{765} - \frac{67}{930}i & 0 & \frac{195}{523} + \frac{113}{191}i \\
- \frac{13}{765} + \frac{67}{930}i & 0 & - \frac{195}{523} + \frac{113}{191}i & 0 & \frac{13}{313} + \frac{296}{869}i & 0 & \frac{487}{823} - \frac{120}{619}i & 0 \\
0 & - \frac{13}{765} + \frac{67}{930}i & 0 & - \frac{195}{523} + \frac{113}{191}i & 0 & \frac{13}{313} + \frac{296}{869}i & 0 & \frac{487}{823} - \frac{120}{619}i \\
\frac{169}{303} - \frac{120}{619}i & 0 & - \frac{71}{467} - \frac{4}{305}i & 0 & \frac{481}{662} - \frac{11}{829}i & 0 & \frac{13}{765} + \frac{292}{925}i & 0 \\
0 & \frac{169}{303} - \frac{120}{619}i & 0 & - \frac{71}{467} - \frac{4}{305}i & 0 & \frac{481}{662} - \frac{11}{829}i & 0 & \frac{13}{765} + \frac{292}{925}i \\
- \frac{481}{662} - \frac{11}{829}i & 0 & \frac{13}{765} - \frac{292}{925}i & 0 & \frac{169}{303} + \frac{120}{619}i & 0 & \frac{71}{467} - \frac{4}{305}i & 0 \\
0 & - \frac{481}{662} - \frac{11}{829}i & 0 & \frac{13}{765} - \frac{292}{925}i & 0 & \frac{169}{303} + \frac{120}{619}i & 0 & \frac{71}{467} - \frac{4}{305}i \\
\end{pmatrix}.\]
We then remove the $I^{\otimes(n-1)}\otimes Z$ as describe in step~\ref{item: z phase}, obtaining
\[\hspace*{-1cm} \widehat{m}_1= \begin{pmatrix}
- \frac{403}{750}i & 0 & \frac{64}{309} - \frac{75}{659}i & 0 & \frac{87}{788} + \frac{191}{981}i & 0 & \frac{456}{943} + \frac{33}{59}i & 0 \\
0 & \frac{403}{750}i & 0 & - \frac{64}{309} + \frac{75}{659}i & 0 & - \frac{87}{788} - \frac{191}{981}i & 0 & - \frac{456}{943} - \frac{33}{59}i \\
- \frac{64}{309} - \frac{75}{659}i & 0 & - \frac{65}{892}i & 0 & \frac{40}{577} + \frac{207}{653}i & 0 & - \frac{87}{788} + \frac{359}{864}i & 0 \\
0 & \frac{64}{309} + \frac{75}{659}i & 0 & \frac{65}{892}i & 0 & - \frac{40}{577} - \frac{207}{653}i & 0 & \frac{87}{788} - \frac{359}{864}i \\
- \frac{87}{788} + \frac{191}{981}i & 0 & - \frac{40}{577} + \frac{207}{653}i & 0 & - \frac{23}{83}i & 0 & - \frac{64}{309} + \frac{364}{829}i & 0 \\
0 & \frac{87}{788} - \frac{191}{981}i & 0 & \frac{40}{577} - \frac{207}{653}i & 0 & \frac{23}{83}i & 0 & \frac{64}{309} - \frac{364}{829}i \\
- \frac{456}{943} + \frac{33}{59}i & 0 & \frac{87}{788} + \frac{359}{864}i & 0 & \frac{64}{309} + \frac{364}{829}i & 0 & \frac{874}{985}i & 0 \\
0 & \frac{456}{943} - \frac{33}{59}i & 0 & - \frac{87}{788} - \frac{359}{864}i & 0 & - \frac{64}{309} - \frac{364}{829}i & 0 & - \frac{874}{985}i \\
\end{pmatrix}\]
and
\[\widetilde{m}_1=\begin{pmatrix}
- \frac{129}{452}i & 0 & 0 & 0 & 0 & 0 & 0 & 0 \\
0 & \frac{129}{452}i & 0 & 0 & 0 & 0 & 0 & 0 \\
0 & 0 & - \frac{129}{452}i & 0 & 0 & 0 & 0 & 0 \\
0 & 0 & 0 & \frac{129}{452}i & 0 & 0 & 0 & 0 \\
0 & 0 & 0 & 0 & - \frac{129}{452}i & 0 & 0 & 0 \\
0 & 0 & 0 & 0 & 0 & \frac{129}{452}i & 0 & 0 \\
0 & 0 & 0 & 0 & 0 & 0 & - \frac{129}{452}i & 0 \\
0 & 0 & 0 & 0 & 0 & 0 & 0 & \frac{129}{452}i \\
\end{pmatrix}.\]

We now use step~\ref{item: seccond decomp} to obtain
\[\hspace*{-1cm} K_{11}=\begin{pmatrix}
\frac{103}{887} + \frac{205}{934}i & 0 & \frac{29}{38} + \frac{142}{641}i & 0 & - \frac{54}{167} - \frac{197}{891}i & 0 & \frac{257}{794} - \frac{191}{866}i & 0 \\
0 & \frac{103}{887} + \frac{205}{934}i & 0 & \frac{29}{38} + \frac{142}{641}i & 0 & - \frac{54}{167} - \frac{197}{891}i & 0 & \frac{257}{794} - \frac{191}{866}i \\
- \frac{509}{666} - \frac{81}{371}i & 0 & \frac{65}{562} + \frac{119}{540}i & 0 & - \frac{294}{907} + \frac{187}{853}i & 0 & - \frac{256}{789} - \frac{165}{754}i & 0 \\
0 & - \frac{509}{666} - \frac{81}{371}i & 0 & \frac{65}{562} + \frac{119}{540}i & 0 & - \frac{294}{907} + \frac{187}{853}i & 0 & - \frac{256}{789} - \frac{165}{754}i \\
- \frac{317}{977} - \frac{138}{629}i & 0 & \frac{294}{907} - \frac{75}{341}i & 0 & \frac{589}{771} - \frac{142}{649}i & 0 & \frac{97}{841} - \frac{179}{814}i & 0 \\
0 & - \frac{317}{977} - \frac{138}{629}i & 0 & \frac{294}{907} - \frac{75}{341}i & 0 & \frac{589}{771} - \frac{142}{649}i & 0 & \frac{97}{841} - \frac{179}{814}i \\
- \frac{257}{794} + \frac{11}{50}i & 0 & - \frac{281}{869} - \frac{80}{363}i & 0 & - \frac{114}{979} + \frac{205}{932}i & 0 & \frac{439}{575} - \frac{212}{959}i & 0 \\
0 & - \frac{257}{794} + \frac{11}{50}i & 0 & - \frac{281}{869} - \frac{80}{363}i & 0 & - \frac{114}{979} + \frac{205}{932}i & 0 & \frac{439}{575} - \frac{212}{959}i \\
\end{pmatrix}\]
and
\begin{align*}
f^{(0)}&=\begin{pmatrix}
\frac{641}{897}i & 0 & 0 & 0 & 0 & 0 & - \frac{371}{452}i & 0 \\
0 & - \frac{641}{897}i & 0 & 0 & 0 & 0 & 0 & \frac{371}{452}i \\
0 & 0 & - \frac{323}{452}i & 0 & \frac{1}{4}i & 0 & 0 & 0 \\
0 & 0 & 0 & \frac{323}{452}i & 0 & - \frac{1}{4}i & 0 & 0 \\
0 & 0 & \frac{1}{4}i & 0 & - \frac{323}{452}i & 0 & 0 & 0 \\
0 & 0 & 0 & - \frac{1}{4}i & 0 & \frac{323}{452}i & 0 & 0 \\
- \frac{371}{452}i & 0 & 0 & 0 & 0 & 0 & \frac{651}{911}i & 0 \\
0 & \frac{371}{452}i & 0 & 0 & 0 & 0 & 0 & - \frac{651}{911}i \\
\end{pmatrix}\\
&=- \frac{129}{452}i\sigma_x\sigma_x\sigma_z+\frac{121}{226}i\sigma_y\sigma_y\sigma_z+\frac{323}{452}i\sigma_z\sigma_z\sigma_z.
\end{align*}
Finally, with step~\ref{item: su(n-1)},
\[K^{(0)}=\begin{pmatrix}
- \frac{593}{790} - \frac{271}{823}i & - \frac{78}{839} + \frac{97}{878}i & - \frac{343}{986} - \frac{55}{761}i & \frac{162}{523} + \frac{236}{809}i \\
\frac{305}{988} - \frac{253}{864}i & - \frac{157}{451} + \frac{38}{535}i & \frac{91}{983} + \frac{41}{370}i & \frac{341}{455} - \frac{81}{244}i \\
- \frac{33}{355} - \frac{71}{643}i & \frac{322}{429} - \frac{302}{917}i & \frac{131}{423} - \frac{68}{233}i & \frac{295}{848} - \frac{6}{83}i \\
- \frac{345}{991} - \frac{66}{929}i & - \frac{296}{959} - \frac{237}{809}i & \frac{625}{834} + \frac{253}{762}i & - \frac{56}{605} + \frac{34}{307}i \\
\end{pmatrix},\]
\[K^{(1)}=\begin{pmatrix}
\frac{103}{887} - \frac{205}{934}i & - \frac{509}{666} + \frac{81}{371}i & - \frac{317}{977} + \frac{138}{629}i & - \frac{257}{794} - \frac{11}{50}i \\
\frac{29}{38} - \frac{142}{641}i & \frac{65}{562} - \frac{119}{540}i & \frac{294}{907} + \frac{75}{341}i & - \frac{281}{869} + \frac{80}{363}i \\
- \frac{54}{167} + \frac{197}{891}i & - \frac{294}{907} - \frac{187}{853}i & \frac{589}{771} + \frac{142}{649}i & - \frac{114}{979} - \frac{205}{932}i \\
\frac{257}{794} + \frac{191}{866}i & - \frac{256}{789} + \frac{165}{754}i & \frac{97}{841} + \frac{179}{814}i & \frac{439}{575} + \frac{212}{959}i \\
\end{pmatrix},\]
\[\widetilde{K}^{(0)}=\begin{pmatrix}
\frac{427}{445} - \frac{212}{753}i & 0 \\
0 & \frac{427}{445} + \frac{212}{753}i \\
\end{pmatrix}
\qandq
\varphi=\frac{408}{577} - \frac{408}{577}i.
\]

This gives us the decomposition
\[G\approx e^{i\varphi}K^{(0)}\otimes I \cdot e^{f^{(0)}}\cdot K^{(1)}\otimes I\cdot I^{\otimes 2}\otimes \widetilde{K}^{(0)}\cdot e^{h^{(0)}},\]
which we can visualize in Fig.~\ref{fig:exanple decomp}. Efficient algorithms then exist for implementing all resulting multi-qubit factors: $K^{(0)}$ and $K^{(1)}$ can be implemented following \cite{PhysRevA.69.032315, QiskitTwoQubitBasisDecomposer}, a processes we exemplify in Fig.~\ref{fig:k0_subdecomp}, while $e^{h^{(0)}}$ and $e^{f^{(0)}}$ can be implemented using methods from \cite{Mansky_2023}, as shown in Fig.~\ref{fig:h0_subdecomp}.
\begin{figure}[h]
    \centering
    \includegraphics[width=\linewidth]{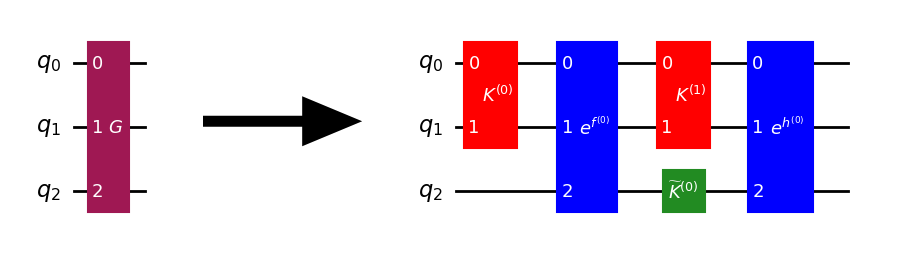}
    \caption{Decomposition of matrix G using our algorithm.}
    \label{fig:exanple decomp}
\end{figure}
\begin{figure}[h]
    \centering
    \includegraphics[width=\linewidth]{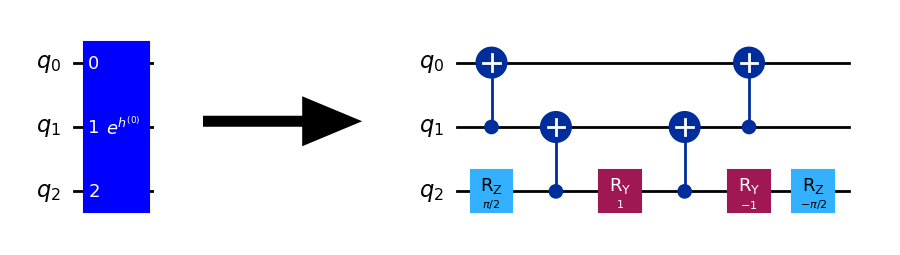}
    \caption{Decomposition of one of the abelian factors into single-qubit rotations and CNOTs using the method by \cite{Mansky_2023}.}
    \label{fig:h0_subdecomp}
\end{figure}
\begin{figure}[h]
    \centering
    \includegraphics[width=\linewidth]{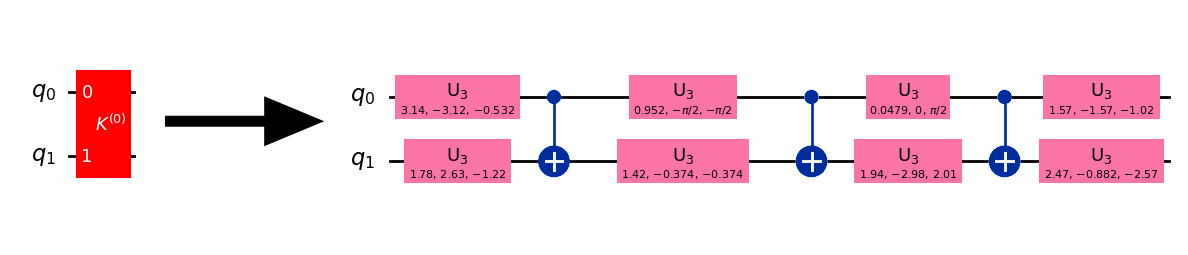}
    \caption{Decomposition of a two-qubit gate into single-qubit rotations and CNOTs using the Qiskit's Python library \cite{QiskitTwoQubitBasisDecomposer}.}
    \label{fig:k0_subdecomp}
\end{figure}

%\newpage
The decomposition errors are summarized in Table~\ref{tab: example}, once again showing the approximation accuracy and subspace fidelity achieved with our method.

\begin{table}[!h]
\centering
\begin{tabular}{l|llll}
         & $E_{a}$ &  $E_{s}(f^{(0)})$ & $E_{s}(h^{(0)})$ \\ \hline
G        & $3.4\cdot10^{-15}$                            & $2.4\cdot10^{-15}$                                 & $5.6\cdot10^{-6}     $
\end{tabular}
\caption{Errors of the decomposition of G.}
\label{tab: example}
\end{table}

\section{Phase correction}
\label{appendix: phase}

An involutive automorphism $\theta$  of  $\su(2^n)$ such that $\theta(g):=UgU^*$, for some matrix $U$, integrates to an automorphism $\Theta$ of the Lie group $\SU(2^n)$, with differential $\theta$ and which satisfies  \eqref{eq: Theta inv}, indeed $\Theta(G)=UGU^*$. Observe that there are elements of $\SU(2^n)$ which lie in the  image of $\Span\{iI\}$ under the exponential map; such elements are of the form $e^{i\varphi}I$, where $e^{i\varphi}$ is a $2^n$-th root of unity. Since 
\begin{equation*}
    \Theta(e^{i\varphi}I)
    =e^{i\varphi}I,
\end{equation*}
    we know from Proposition \ref{prop: m involution trick} that
\begin{equation*}
    \exp(2m)=\Theta(e^{-i\varphi}I)e^{i\varphi}I=I.
\end{equation*}
Therefore $m=0$. 

However, for the Khaneja-Glaser basis and the automorphism $\theta_X$, the elements $e^{i\varphi}I$ are not necessarily spanned by elements of the subalgebra $\mathfrak{k}_n$. As it happens, for those matrices we also have $\Theta_X(e^{i\varphi}I)=e^{i\varphi}I$, so the global phases are invariant in the decomposition and are `stored' in the matrix $K_{j,0}$, in the sense that
\begin{equation*}
    K_{j,0} =e^{i\varphi}\exp(k_{j,0}),
\end{equation*}
    with $k_{j,0}\in \su(2^{n-1})\otimes I$. Therefore, after removing the even rows and columns of $K_{j,0}$, we must extract the global phase to obtain the desired element in $\SU(2^{n-1})$.

\section{BCH expansion}
\label{app: BCH}

Here we will expand on how \cite{SaEarp2005} implemented steps \ref{item: comp. basis}-\ref{item: log m} of the decomposition, mainly finding $k\in\mathfrak{k}_n$ and $m\in\mathfrak{m}_n$ such that
\begin{equation}\label{eq: BCH basis}
    G=e^g=e^{k}e^m.
\end{equation}

Their approach is based on the Baker-Campbell-Hausdorff (BCH) expansion, which enables the approximation of $\log\left(e^ae^b\right)$. It can be computationally implemented using Dynkin's formula \cite{dynkin1949representation,10.1063/1.528242}
\begin{equation*}
    \log\left(e^ae^b\right) =\sum_{n=1}^{\infty} \frac{(-1)^{n-1}}{n} \sum_{\substack{r_1+s_1>0 \\ \vdots \\ r_n+s_n>0}} \frac{\left[a^{r_1} b^{s_1} a^{r_2} b^{s_2} \cdots a^{r_n} b^{s_n}\right]}{\left(\sum_{j=1}^n\left(r_j+s_j\right)\right) \cdot \prod_{i=1}^n r_{i} ! s_{i} !},
\end{equation*}
where the sum is performed over all nonnegative values of $s_i$ and $r_i$, and 
\begin{equation*}
    \left[a^{r_1} b^{s_1} \cdots a^{r_n} b^{s_n}\right]=[\underbrace{a,[a, \cdots[a}_{r_1},[\underbrace{b,[b, \cdots[b}_{s_1}, \cdots[\underbrace{a,[a, \cdots[a}_{r_n},[\underbrace{b,[b, \cdots b}_{s_n}]].
\end{equation*}
Applying the formula to  \eqref{eq: BCH basis}, one gets 
\begin{align*}
\begin{split}
    g & =k+m+\underbrace{\frac{1}{2}[k, m]}_{\in \mathfrak{m}_n}+\underbrace{\frac{1}{12}[k,[k, m]]}_{\in \mathfrak{m}_n}+\underbrace{\frac{1}{12}[m,[m, k]]}_{\in \mathfrak{k}_n}+\underbrace{\frac{1}{24}[k,[m,[k, m]]]}_{\in \mathfrak{k}_n}+\ldots, \\
& =P_{\mathfrak{k}}(k,m)+P_{\mathfrak{m}}(k,m),
\end{split}
\end{align*}
where
\begin{align*}
\begin{split}
P_{\mathfrak{k}}(k,m) & =k+\frac{1}{12}[m,[m, k]]+\frac{1}{24}[k,[m,[k, m]]]+\ldots \in \mathfrak{k}_n, \\
P_{\mathfrak{m}}(k,m) & =m+\frac{1}{2}[k, m]+\frac{1}{12}[k,[k, m]]+\ldots \in \mathfrak{m}_n .
\end{split}
\end{align*}
If one now defines
$\widetilde{P}(k,m)=P_{\mathfrak{k}}(k,m)-\mathrm{proj}_{\mathfrak{m}_n}(g)$, a root of $\widetilde{P}$ would also solve  \eqref{eq: BCH basis}.
Moreover, given $m$, we can find express $k=k(m)$ via
\begin{equation*}
    k=\log(Ge^{-m}),
\end{equation*}
using again the BCH expansion:
\begin{equation*}
    k=Q(m)=\log \left(e^g e^{-m}\right)=g-m-\frac{1}{2}[g, m]-\frac{1}{12}[g,[g, m]]+\frac{1}{12}[m,[m, g]]+\dots.
\end{equation*}
Furthermore, truncating
\begin{equation}
\label{eq: BCH polynomial}
    \widetilde{P}(Q(m),m)=P_{\mathfrak{k}}(Q(m),m)-\mathrm{proj}_{\mathfrak{m}_n}(g)
\end{equation}
sufficiently far down the series, it is not difficult to numerically find a root.
\end{document}